\documentclass[10pt,journal,final,twocolumn]{IEEEtran}

\usepackage[T1]{fontenc}
\usepackage{graphicx}
\usepackage{amsthm}
\usepackage{amsfonts,amssymb,amsmath}
\usepackage{mathtools}
\usepackage{subcaption}
\usepackage{cite}
\usepackage{color}
\usepackage{algorithm, algorithmic}

\newtheorem{theorem}{Theorem}
\newtheorem{proposition}{Proposition}
\newtheorem{corollary}{Corollary}
\newtheorem{lemma}{Lemma}

\begin{document}

\title{Power Beacon Placement for Maximizing Guaranteed Coverage in Bistatic Backscatter Networks}

\author{Xiaolun~Jia,~\IEEEmembership{Graduate Student Member,~IEEE,}
        and~Xiangyun~Zhou,~\IEEEmembership{Senior Member,~IEEE}
        
\thanks{The authors are with the School of Engineering, The Australian National University, Canberra, ACT 2601, Australia (e-mail: \{xiaolun.jia, xiangyun.zhou\}@anu.edu.au).}
}

\maketitle

\begin{abstract}
The bistatic backscatter architecture, with its extended range, enables flexible deployment opportunities for backscatter devices. In this paper, we study the placement of power beacons (PBs) in bistatic backscatter networks to maximize the guaranteed coverage distance (GCD), defined as the distance from the reader within which backscatter devices are able to satisfy a given quality-of-service constraint. This work departs from conventional energy source placement problems by considering the performance of the additional backscatter link on top of the energy transfer link. We adopt and optimize a symmetric PB placement scheme to maximize the GCD. The optimal PB placement under this scheme is obtained using either analytically tractable expressions or an efficient algorithm. Numerical results provide useful insights into the impacts of various system parameters on the PB placement and the resulting GCD, plus the advantages of the adopted symmetric placement scheme over other benchmark schemes.
\end{abstract}

\begin{IEEEkeywords}
Bistatic backscatter communication, coverage, placement optimization, power beacons.
\end{IEEEkeywords}

\section{Introduction}

\IEEEPARstart{B}{ackscatter} communication has emerged as a promising technology to improve device lifetime. Rather than transmitting signals using active circuitry, communication is performed passively using existing radiofrequency (RF) signals. The use of simpler, passive circuitry as opposed to RF transmit chains significantly reduces the devices' power consumption. A prime example of backscatter communication is radiofrequency identification (RFID), in which a powered reader transmits an unmodulated continuous wave (CW) signal to be received by RFID tags. The tags modulate their data onto the CW by switching between load impedances connected to its antenna to vary the amount of reflected signal power, and the modulated signal returns to the reader to be decoded. Notable studies on RFID system performance can be found in \cite{Van08, Ble10, Boy12}.

Among the different backscatter architectures, the bistatic architecture, whose concept can be traced back to \cite{patent}, has received considerable recent interest due to its extended communication range. In a bistatic backscatter system, an RF carrier emitter, hereafter referred to as a power beacon (PB), is separated from the reader, such that the power-up and backscatter links undergo uncorrelated fading. By placing a backscatter device (BD) close to a PB, the backscattering range of the BD may be significantly improved from that in RFID systems. Link budgets for the bistatic setup were examined in \cite{GD09}. Initial experimental studies using commodity radios were conducted in \cite{Kim14}, where backscattering ranges of over $100$~m, or an order of magnitude further than in the monostatic and ambient architectures, were achieved. Work in \cite{Fas15} studied channel coding and interleaving at the BD to mitigate the effect of deep fades, and proposed various coherent detectors which enabled ranges of up to $150$ m. Work in \cite{Shen16} proposed solutions to the phase cancellation problem when unmodulated CWs are transmitted by PBs. Work in \cite{Alev17} extended \cite{Fas15} to account for noncoherent detection; while work in \cite{Zhu18} utilized the separation of the reader and PB to introduce a grant-free random access protocol for BDs to enable efficient concurrent communication over long distances with simplistic devices. Key applications envisioned to be fulfilled by bistatic backscatter systems include tracking, localization and environmental monitoring for smart cities, all of which fall under the sensor networks component of the Internet of Things paradigm, and place emphasis on extended coverage over large areas.

\subsection{Motivation and Contributions}

Much of the existing literature on bistatic backscatter has focused on performance studies centered around one representative BD in systems with one PB. Given the long ranges capable by BDs in bistatic configuration, it is natural to ask how the extended range can be translated to extended \textit{coverage} over larger areas to cater to the mentioned applications, which may comprise of many BDs. Specifically, the question of how the deployment of PBs, such as their number and locations, may be optimized in a cost-efficient way, remains to be answered. Few works have analyzed system-level performance of backscatter networks with multiple nodes in some way. Work in \cite{HH17} used stochastic geometry to model the placement of collocated readers and PBs and its effects on coverage and capacity in a network with many BDs. In studying the system-level performance of a bistatic backscatter network, work in \cite{Multistatic} provided some initial insights into the placement of multiple PBs by using a square grid placement scheme around the reader. However, the objective therein was not to optimize the PB placement to maximize coverage, and no further optimization insights were offered. Work in \cite{BC17} compared the coverage and capacity of a backscatter network and a non-backscatter wireless powered communication network (WPCN). However, PBs were placed randomly therein to facilitate D2D communication between BDs, as opposed to BD-reader communication. Thus, apart from \cite{Multistatic}, none of the other works provided design insights specifically applicable to the optimization of bistatic backscatter networks. Moreover, these works did not explore the placement optimization of PBs in bistatic backscatter networks for maximum coverage, which constitutes a significant knowledge gap.

An important consideration for the choice of network deployment strategy is to achieve good coverage with low cost. In bistatic backscatter systems, the cost of a reader can be one order of magnitude higher than the cost of a PB, as readers possess complex signal processing circuitry absent in PBs. Therefore, it is wise to deploy many PBs surrounding a small number of readers to achieve good coverage with low deployment cost. This has motivated the baseline setting considered in this paper, which is a bistatic backscatter network consisting of one reader and multiple PBs. We study the design problem involving deterministic PB placement in a bistatic setup with a central reader. Note that our scenario and design problem shares similarities with a cellular network, in which a much more expensive base station is placed at the center of the cell and a number of low-cost relay stations are deployed in the cell to improve communication performance such as coverage. Furthermore, it is reasonable to determine PB placement prior to BD deployment, so that the BDs may be moved from time to time. The chosen PB placement strategy should provide coverage over a targeted geographical area, rather than a specific set of BD locations. To the best of our knowledge, the PB placement optimization problem in bistatic backscatter networks is yet to be addressed in the literature.

We aim to carefully place PBs to maximize the distance from the reader within which BDs are guaranteed to meet a certain quality-of-service (QoS). That is, our PB placement design guarantees coverage for BDs based on their distances from the reader, instead of their specific locations. This allows the BD topology to be altered for application-specific needs, without requiring the PBs to be redeployed (for which significant effort is required). This is beneficial for the mentioned applications of bistatic backscatter, where large numbers of BDs are present over a certain geographical area, whose locations may be subject to change. The problem is unique in that the coverage areas of individual PBs are irregularly shaped, owing to the complete dependence of the BD-reader link on the PB-BD link under bistatic backscatter, a characteristic not present in any other similar wireless network. 

Our main contributions are as follows:
\begin{itemize}
\item We present the first study on the PB placement problem in bistatic backscatter networks. By defining the QoS in terms of an outage probability constraint, we formulate the PB placement problem for maximizing the guaranteed coverage distance (GCD), a metric which we propose for its straightforward characterization of the coverage area.

\item Due to the complex nature of each PB's coverage area, we adopt a circular symmetric placement strategy commonly used in WPCNs. We examine cases where each BD is served either by its nearest PB or simultaneously by multiple nearest PBs. Closed-form expressions for the optimal PB-reader distance and asymptotic GCD are derived for the case where each BD is served by its nearest PB; and a low-complexity algorithm is proposed to determine the optimal PB-reader distance when multiple serving PBs are involved.

\item Extensive numerical results are presented to characterize the impacts of system parameters on the GCD, and to highlight design insights. Particularly, a significant GCD improvement is observed by using two serving PBs for each BD instead of only one serving PB, but allowing more than two serving PBs quickly results in diminished returns. Comparisons with alternative PB placement schemes are also presented, which highlight the favorable performance of our placement scheme given its low-complexity nature.
\end{itemize}

\subsection{Related Work}

The PB placement problem studied in this paper shares certain common elements with existing literature on the relay placement problem in cellular networks and non-backscatter WPCNs, but presents unique differences in the context of bistatic backscatter and its properties.

\emph{1) Relay Placement Problem in Cellular Networks:} Some similarities exist between our PB placement problem and the relay placement problem in conventional communications \cite{P2, P4, P5, P8}, where the aim is often to determine optimal locations of powered relaying nodes to ensure that coverage, throughput or lifetime constraints are satisfied at distantly located nodes. As this problem is known to be intractable \cite{P2, P5}, works such as \cite{P8} proposed algorithms that guarantee solutions with various approximation ratios. However, our work differs significantly from conventional relay placement, wherein the source-relay and relay-destination links are decoupled due to the actively transmitting nature of relays, rendering their coverage areas as circular. The use of PBs in bistatic backscatter, where the BD-reader link completely depends on the PB-BD link, results in irregular coverage areas for each PB. This adds further complexity compared to the node placement problem compared to relay-assisted networks.

\emph{2) Wireless Powered Communication Networks:} In WPCNs, users harvest and store energy received from RF sources for active transmissions \cite{W9, W93, Guo19}. A number of studies on power source placement exist for WPCNs \cite{W7, W10, W12, W14, W94, W95, Ros20}. Specifically, \cite{W7} considered a symmetric deployment of PBs equidistant from a base station, and derived outage probability expressions in terms of the number of PBs and the PB-base station distance. A similar symmetric placement scheme was utilized in \cite{W95} to establish the optimal locations of the PBs' distributed antennas to maximize the wireless power transfer efficiency. Work in \cite{W10} assumed known user locations and proposed suboptimal algorithms to divide the network into partitions, with one PB deployed at a time to ensure that each PB and all PBs beforehand were placed in locally-optimal solutions. Despite the extensive literature, we note that these works have mostly studied scenarios where PB locations were either nonuniform to cover a specific user topology, or only catered to representative user locations in terms of metrics averaged over all users, as opposed to guaranteeing a certain level of performance for every user. As such, there still lacks placement strategies which achieve guaranteed, location-independent performance to users, and more so in backscatter networks.

We demonstrate that the PB placement scheme proposed in this paper is not only able to take advantage of the extended range of bistatic backscatter, but fulfills three key requirements highlighted by, and missing in, these prior works: cost-efficient deployment, straightforward characterization of irregular coverage areas, and guaranteed coverage for arbitrary BD topologies.

The rest of this paper is organized as follows. Section II introduces the system and signal models, and formulates the PB placement problem. Section III presents the solution to the PB placement problem for the case where each BD is served by its nearest PB. Section IV presents a low-complexity algorithm for solving the PB placement problem where each BD is served by multiple PBs. Numerical results are presented in Section V and Section VI concludes the paper.

\textit{Notations:} $j = \sqrt{-1}$ denotes the complex unit, and $|\cdot|$ denotes the magnitude of a complex number. $\mathbb{E}\left\{ \cdot \right\}$ and $\mathrm{Pr}(\cdot)$ denote the expectation operator and the probability of an event, respectively. $\Gamma(\kappa, \tau)$ denotes a gamma random variable with shape parameter $\kappa$ and scale parameter $\tau$. Boldface letters denote vectors, as in $\mathbf{a}$; $\left\lVert \mathbf{a} \right\rVert$ denotes the Euclidean norm of $\mathbf{a}$. 


\section{System Model and Problem Formulation}

We consider a bistatic backscatter network with one reader located at the origin and a number of BDs dispersed around the reader. The BDs are modeled as being stationary, but their locations within the network may change in the long term for application-specific purposes. The data packets generated at the BDs are to be transmitted to the reader via backscattering. We assume semi-passive BDs with built-in batteries to support their circuit operations, and extend the analysis numerically to passive BDs in Section V-B. Each BD has two load impedances connected to its antenna. Two modulation schemes are commonly used in bistatic backscatter networks: on-off keying (OOK) and frequency-shift keying (FSK). Where OOK is used, we assume that one BD is served at a time. That is, the BD being served needs to be acknowledged by the network through a contention process before transmission can occur. Where FSK is used, multiple BDs may simultaneously backscatter their data symbols using unique, sufficiently spaced subcarrier frequencies assigned to each BD. Since no interference occurs between the transmissions from the BDs in either case, we focus our analysis on a single BD. We show in the sequel that the coverage analysis for both modulation schemes can be performed using an identical procedure.

A total of $M$ PBs are deployed to facilitate the backscatter communication of the BDs. Each PB performs energy transmissions omnidirectionally \cite{W10, Ros20} with power $P$. We allow either one or multiple nearest PBs to serve a BD (e.g., two PBs can simultaneously provide energy transmissions to serve the same BD). The set of indices of PBs serving BD $k$ is denoted by $\Phi_{k}$.

For the rest of this paper, we abbreviate PBs (a.k.a. carrier emitters), BDs (a.k.a. tags) and the reader as $C$, $T$ and $R$, respectively, in boldface letters and subscripts. Indexing for PBs and BDs are $m \in \{0, ..., M-1\}$ and $k \in \{1, ..., K\}$. The system diagram is shown in Fig. \ref{fig:systemDiagram}.

\begin{figure}[!t]
\centering
\includegraphics[width=3.5in]{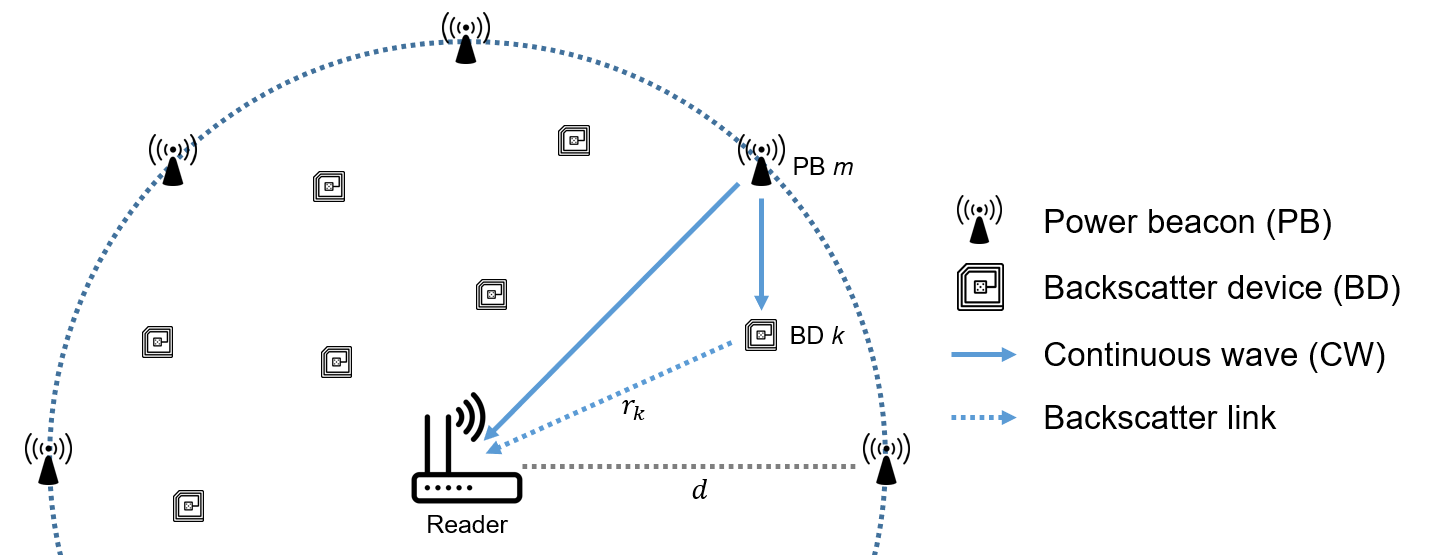}
\caption{Bistatic backscatter network with PBs.}
\label{fig:systemDiagram}
\end{figure}

\subsection{Channel and Signal Model}

Denote the complex channel coefficient from node $i$ to node $j$ by $h_{i,j}$, which accounts for small-scale fading such that $\mathbb{E}\left\{\left|h_{i,j}\right|^{2}\right\} = \beta_{0} \left|\left| \mathbf{n}_{i} - \mathbf{n}_{j} \right|\right|^{-\delta}$, where $\beta_{0} = \left(\frac{\lambda}{4\pi}\right)^{2}$ is the reference path loss; $\lambda$ is the carrier wavelength; $\delta \geq 2$ is the path loss exponent; and $\mathbf{n}_{k}$ is the coordinate vector of node $k$ of type $\mathbf{n} \in \{\mathbf{c},\mathbf{t},\mathbf{r}\}$ for PB, BD and reader, respectively. Frequency-flat quasi-static fading is assumed, consistent with existing backscatter works. In this paper, we adopt Nakagami small-scale fading due to its mathematical tractability, as in \cite{Multistatic, Ves18}, and also for its ability to approximate line-of-sight situations, which is common for the applications of bistatic backscatter considered in \cite{Kim14, Alev17, Multistatic}. We assume uniform propagation conditions throughout the network, noting that the nonuniform case (e.g., localized blockages) is another important problem, but is outside the scope of this work.

Each serving PB transmits an identical CW signal with power $P$. The equivalent baseband signal received at BD $k$ from the serving PB set is the superposition of all the CW components:
\begin{equation}
y_{k}(t) = \sqrt{P}\!\sum_{m \in \Phi_{k}} h_{C_{m},T_{k}} \ c(t), \label{eq:tagReceived}
\end{equation}
where $c(t)$ is the baseband representation of the CW. BD $k$ performs modulation on the CW by switching between its two impedances to generate two reflection coefficients, which determine the strength of the backscattered signal. The baseband information signal at BD $k$ is given by
\begin{equation}
b_{k}(t) = A_{k} - G_{k}(t), \label{eq:tagBaseband}
\end{equation}
where $A_{k} \in \mathbb{C}$ is the antenna structural mode of BD $k$, and $G_{k}(t) \in \mathbb{C}$ refers to the time-varying reflection coefficient function. $G_{k}(t)$ takes on two values $G_{k,0}$ and $G_{k,1}$, with unit magnitude or less. This results in two values for $b_{k}(t)$, denoted by $b_{k,0}$ and $b_{k,1}$. The backscattered signal is
\begin{equation}
x_{k}(t) = \sqrt{\eta} \ y_{k}(t) \ b_{k}(t), \label{eq:tagSent}
\end{equation}
where $\eta$ is the backscatter switching loss coefficient, herein modeled as a constant that is consistent across all BDs. Note that no noise term is present in $x_{k}(t)$, which is a common assumption in the backscatter literature, as BDs do not perform any signal processing. The reader receives the summation of the backscattered information signal and a direct-link CW signal, in addition to the noise. 

We first derive the SNR over one symbol period at the reader for a symbol transmitted by the BD under OOK. The received signal can be written as
\begin{align}
x_{R}(t) &= \sqrt{P}\!\sum_{m \in \Phi_{k}} h_{C_{m},R} \ c(t) \nonumber \\ 
& \quad + \sqrt{P \eta} \ h_{T_{k},R} \ b_{k}(t)\!\sum_{m \in \Phi_{k}} h_{C_{m},T_{k}} \ c(t) + w(t), \label{eq:readerReceived}
\end{align}
where $w(t)$ is the noise at the reader. We assume that the durations of the CW and backscatter transmissions are identical, and that these signals are received at the same time at the reader without delay \cite{Wang16}. As the baseband representation of the CW signal at the reader is essentially a DC term\footnote{In this paper, DC terms refer to signal components which may be considered as constant terms at baseband, following the convention in \cite{Kim14}.}, the first term in (\ref{eq:readerReceived}) is a constant term bearing no information. Thus, the first term in (\ref{eq:readerReceived}) may be subtracted from the overall received signal without affecting the difference between the two levels in the second term of (\ref{eq:readerReceived}). The second term represents another constant term from the backscattered CW plus an oscillating term between the two levels of $b_{k}(t)$ from the backscatter modulation.  Noting that the latter can be split into an average term of the two levels caused by $b_{k}(t)$ (which is another DC term) and a modulated component, we rewrite (\ref{eq:readerReceived}) as follows:
\begin{align}
\tilde{x}_{R}(t) &= \bigg( \sqrt{P \eta} \ h_{T_{k},R}\!\sum_{m \in \Phi_{k}} h_{C_{m},T_{k}} \ c(t) \bigg) \nonumber \\ & \quad \times \left( \frac{b_{k,0} + b_{k,1}}{2} + \frac{b_{k,0} - b_{k,1}}{2} \mathcal{B}(t) \right) + w(t), 	\label{eq:readerReceived_noDL}
\end{align}
where $\mathcal{B}(t) \in \{-1, 1\}$, representing the two levels. Note that $\frac{b_{k,0} + b_{k,1}}{2}$ is also a DC term and can be removed. Following the removal of the second DC term from the backscattered CW, the received signal at the reader is discretized with sampling period $T$, which results in two possible values for the received signal:
\begin{equation}
x_{R}[l]\!=\!\pm\!\bigg(\!\sqrt{P \eta} \ h_{T_{k},R}\!\sum_{m \in \Phi_{k}} h_{C_{m},T_{k}} c(lT)\!\bigg) \frac{b_{k,0}\!-\!b_{k,1}}{2} + w[l]. \label{eq:readerReceivedCases}
\end{equation}
We let each symbol period $T_{sym}$ be comprised of $L$ samples, i.e., $T_{sym} = L T$.

Under OOK, the reflection coefficient $b_{k}[l]$, in the form of $b_{k,0}$ and $b_{k,1}$ in (\ref{eq:readerReceivedCases}), remains constant over $L$ samples of one symbol. Let $p$ and $q$ represent the non-noise term in the two quantities in (\ref{eq:readerReceivedCases}), with $i = 0$ and $1$, respectively. That is, $p = -\left( \sqrt{P \eta} \ h_{T_{k},R} \ \sum_{m \in \Phi_{k}} h_{C_{m},T_{k}} \right) \frac{b_{k,0} - b_{k,1}}{2}$ and $q = \left( \sqrt{P \eta} \ h_{T_{k},R} \ \sum_{m \in \Phi_{k}} h_{C_{m},T_{k}} \right) \frac{b_{k,0} - b_{k,1}}{2}$. In defining the SNR, the `signal' term is given by the difference between $|q|$ and $|p|$. Setting $|p|$ as the reference level, we take $(|q| - |p|)^{2}$ as the signal energy and obtain the SNR expression over one symbol period in (\ref{eq:SNR}). This expression is similar compared to \cite[Eq. (39)]{Kim14}, except that the CW is always on during each transmission in our case, resulting in the removal of the $\frac{1}{2}$ constant therein:
\begin{align}
\gamma_{k} &= \frac{\left(|q| - |p|\right)^{2}}{N_{0}} L \nonumber \\
&= \frac{P \eta \left|h_{T_{k},R}\right|^{2} |b_{k,0} - b_{k,1}|^{2} \left| \sum_{m \in \Phi_{k}} h_{C_{m},T_{k}}\right|^{2}}{N_{0}} L, \label{eq:SNR}
\end{align}
where $N_{0}$ is the variance of the discrete-time noise samples $w[l]$. Note that (\ref{eq:SNR}) is applicable to any BD in the system, due to the fact that one BD is served at any given time without interference from other BD transmissions.

Now we consider FSK modulation. In this case, switching between the two signal levels, similar to those in (\ref{eq:readerReceivedCases}), occurs over each symbol period at one of two subcarrier frequencies denoted by $f_{k,0}$ and $f_{k,1}$ for the $k$-th BD, which represent bits $0$ and $1$, respectively. Thus, the discretized baseband signal at the reader, after removing the DC terms, is given as \cite[Eq. (52)]{Kim14}:
\begin{align}
x_{R}[l] &= \bigg( \sqrt{P \eta} \ h_{T_{k},R}\!\sum_{m \in \Phi_{k}} h_{C_{m},T_{k}} \bigg) \nonumber \\
& \quad \times \frac{b_{k,0} - b_{k,1}}{2} \ \textsf{SW}(i, lT) + w[l],    \label{eq:readerReceivedFSK}
\end{align}
where $\textsf{SW}(i, t) = \frac{4}{\pi} \cos(2 \pi f_{i} t + \phi)$ is the baseband representation for a square wave with $50$\% duty cycle, frequency $f_{i}$ and filtered at the fundamental frequency, with $\phi \in [0, 2\pi]$ and $i \in \{0, 1\}$. As the cosine term oscillates between $-1$ and $1$, the two signal levels, similar to $p$ and $q$ before, are given by $\pm \left( \sqrt{P \eta} \ h_{T_{k},R} \sum_{m \in \Phi_{k}} h_{C_{m},T_{k}} \right) \frac{2(b_{k,0} - b_{k,1})}{\pi}$. Notice that the received signal under FSK has an additional $\frac{2}{\pi}$ term compared to OOK, and that the two signal levels apply to both bits $0$ and $1$, with the two symbol representations differing only in the frequencies of their CW components at baseband. The signal is demodulated using a correlator demodulator set to frequencies $\pm f_{k,0}$ and $\pm f_{k,1}$, where, if bit $0$ is sent, the demodulator at $\pm f_{k,0}$ outputs magnitude equal to (\ref{eq:readerReceivedFSK}), while the demodulator at $\pm f_{k,1}$ outputs the noise power, and vice versa. The SNR over one symbol period is similarly derived by rearranging the corresponding SNR definition in \cite{Kim14} noting that the CW is always on during each transmission, and also applies to any BD in the system due to the orthogonal subcarrier frequencies assigned to each BD:
\begin{equation}
\gamma_{k} =  \frac{4 P \eta \ |h_{T_{k},R}|^{2} \ |b_{k,0} - b_{k,1}|^{2} \left|\sum_{m \in \Phi_{k}} h_{C_{m},T_{k}}\right|^{2}}{\pi^{2} N_{0}} L.	\label{eq:SNR_FSK}
\end{equation}
Thus, we find that the SNR definition under FSK takes on an equivalent form compared to that of OOK in (\ref{eq:SNR}), in that both expressions contain the common term $|b_{k,0} – b_{k,1}|^{2}$ denoting the difference between the two energy levels, and differing only by a multiplicative constant of $\frac{4}{\pi^{2}}$. This means that the SNR-based analysis and design for both OOK and FSK are essentially the same (with a slight difference in the interpretation of results due to the multiplicative constant). Thus, we will use (\ref{eq:SNR}) as the SNR definition in our analysis and design hereafter.

During system operation, the reader must establish knowledge of certain statistics required for detection through training procedures, which vary with the modulation scheme. For both OOK and FSK, the transmission of deterministic training sequences is required from the BD. For OOK, the reader is able to establish the detection threshold from only the energy levels, without requiring the knowledge of instantaneous channel coefficients. For FSK, channel coefficients are required, and may be estimated using the least-squares approach in \cite{Fas15}.

We note the existence of several additional parameters in the link budget for bistatic backscatter communication, as outlined in \cite{GD09}, including antenna gains, polarization mismatch and on-object gain penalties. These can be readily added to the numerator or denominator of the SNR for gains and losses, respectively.

\subsection{Performance Metric and Problem Formulation}

In this work, we aim to address the PB placement problem for maximizing the coverage area of the network. The metric we use to represent coverage is the GCD, denoted by $r_{cov}$. Given a reader at the center of the network, the GCD is defined as the maximum distance at which a BD can be located from the reader, such that it is able to achieve a given QoS requirement. In other words, any BD located at a distance smaller than $r_{cov}$ from the reader is guaranteed to achieve the QoS requirement. The QoS is indicated by the outage probability, given for BD $k$ by $P_{out,k} = \mathrm{Pr}\left(\gamma_{k} < \gamma_{th}\right)$, where $\gamma_{th}$ is the minimum acceptable SNR threshold. The QoS requirement is then given by a maximum acceptable outage probability, denoted by $\varepsilon$. In other words, if BD $k$ is located at a distance smaller than $r_{cov}$, it is guaranteed to have $P_{out,k} < \varepsilon$.

The problem considers the placement of $M$ PBs in carefully chosen locations to maximize the GCD. In general, finding the optimal PB placement is analytically intractable and typically requires exhaustive search. To facilitate the network design, we consider one placement scheme that deploys the PBs at equal distance $d$ from the reader, with equal angular separation, similar to prior works on wireless power transfer such as \cite{W7, W95, Ros20}. Different to these works, however, the additional backscatter link must also be accounted for in the coverage analysis. We numerically demonstrate the favorable performance of this placement strategy compared to other schemes in Section V-C. We aim to determine the optimal PB-reader distance, $d^{*}$ (hereafter referred to simply as `PB distance'), such that $r_{cov}$ is maximized. 

The considered scheme is independent of specific BD locations (i.e., the design of $d^{*}$ is not specifically optimized for any BD topology), such that BDs can be installed, removed or relocated without requiring the PBs to be redeployed. A further important consideration in our placement strategy is to guarantee no coverage holes, i.e., all locations within distance $r_{cov}$ of the reader must satisfy the QoS requirement, which renders random (as opposed to deterministic) placement schemes inappropriate for this purpose. Coverage holes occur when the PB-reader distance is large, causing the coverage area to become discontinuous. To avoid coverage holes, and to ensure ease of BD deployment without the need to manually verify coverage at every location and over a range of channel conditions over time, it is necessary to impose a QoS requirement at all locations. The use of a QoS requirement accounts for the long-term distribution of the fading channels. The GCD metric distinguishes our problem compared to e.g., \cite{W7}, which performs the placement optimization by averaging the outage probability over the entire network without guaranteeing a minimum QoS requirement at every location. The problem can be written as:
\begin{subequations}
\begin{align}
\text{(P1)}: ~~\max_{d} ~~~&r_{cov}(\{\mathbf{c}_{i}\}, \{\mathbf{t}_{k}\}) \label{eq:P1a} \\
\mathrm{s.t.}~~~~& P_{out,k} < \varepsilon, \forall k: \lVert \mathbf{t}_{k} \rVert < r_{cov}, \label{eq:P1b} \\
& \lVert \mathbf{c}_{i} \rVert = d, \forall i \in \{1, ..., M\}, \label{eq:P1c} \\
& \theta_{m_{i}, m_{j}} = \frac{2 \pi}{M}, \forall i, j: i \in \{0, ..., M-1\}, \nonumber \\
& \qquad \qquad j \in \{i-1, i+1\} \ \mathrm{mod} \ M, \label{eq:P1d}
\end{align}
\end{subequations}
where $\{\mathbf{c}_{i}\}$ and $\{\mathbf{t}_{k}\}$ denote the sets of PB and BD location vectors; and $\theta_{m_{i}, m_{j}} = \theta_{m_{i}} - \theta_{m_{j}}$ represents the angular spacing between adjacent PBs $i$ and $j$. Constraint (\ref{eq:P1b}) states that all BDs within the GCD must satisfy the QoS requirement; (\ref{eq:P1c}) requires that all PBs must be equidistant from the reader; and (\ref{eq:P1d}) requires equal angular spacing between adjacent PBs. (P1) is complicated by the fact that an expression for $r_{cov}$ is not available beforehand. To address this significantly limiting factor and to improve mathematical tractability, in the following sections, we solve an equivalent problem, and in the process, obtain insights into the behavior of $r_{cov}$.

Hereafter, we drop the BD indexing and focus on one representative BD. In the following sections, we will separately analyze the cases of single and multiple serving PBs.


\section{PB Placement Analysis: Single Serving PB}

In this section, we analytically derive the optimal PB placement distance when each BD is served only by its nearest PB. First, the exact outage probability expression for an arbitrarily located BD is presented, in terms of its distance from the reader $r$ (to refer to the communication \underline{r}ange of BDs), the PB \underline{d}istance $d$, and the number of PBs $M$. We then solve the PB placement problem based on our proposed placement scheme, and provide exact expressions for $d^{*}$. For ease of exposition, we denote all distance quantities associated with a PB's location by the variable $d$ and associated subscripts, and assign the variable $r$ and associated subscripts to describe the location for an arbitrary BD.

\subsection{Outage Probability of an Arbitrarily Located BD}

We first determine the outage probability of a BD, conditioned on its distance from the reader $r$, and its angle relative to the $x$-axis, $\theta$. Allowing each BD to be served only by its nearest PB is equivalent to each PB serving only the users located to either side within an angular distance of $\frac{\pi}{M}$. In other words, the PB's serving region is a sector with angular width $\frac{2 \pi}{M}$, with the PB located on the line bisecting the sector. Thus, all sectors are identical for the purpose of analysis, and we consider a representative sector (Fig. \ref{fig:contours}) in the following analysis.

\begin{figure*}
\centering
\includegraphics[width=0.7\textwidth]{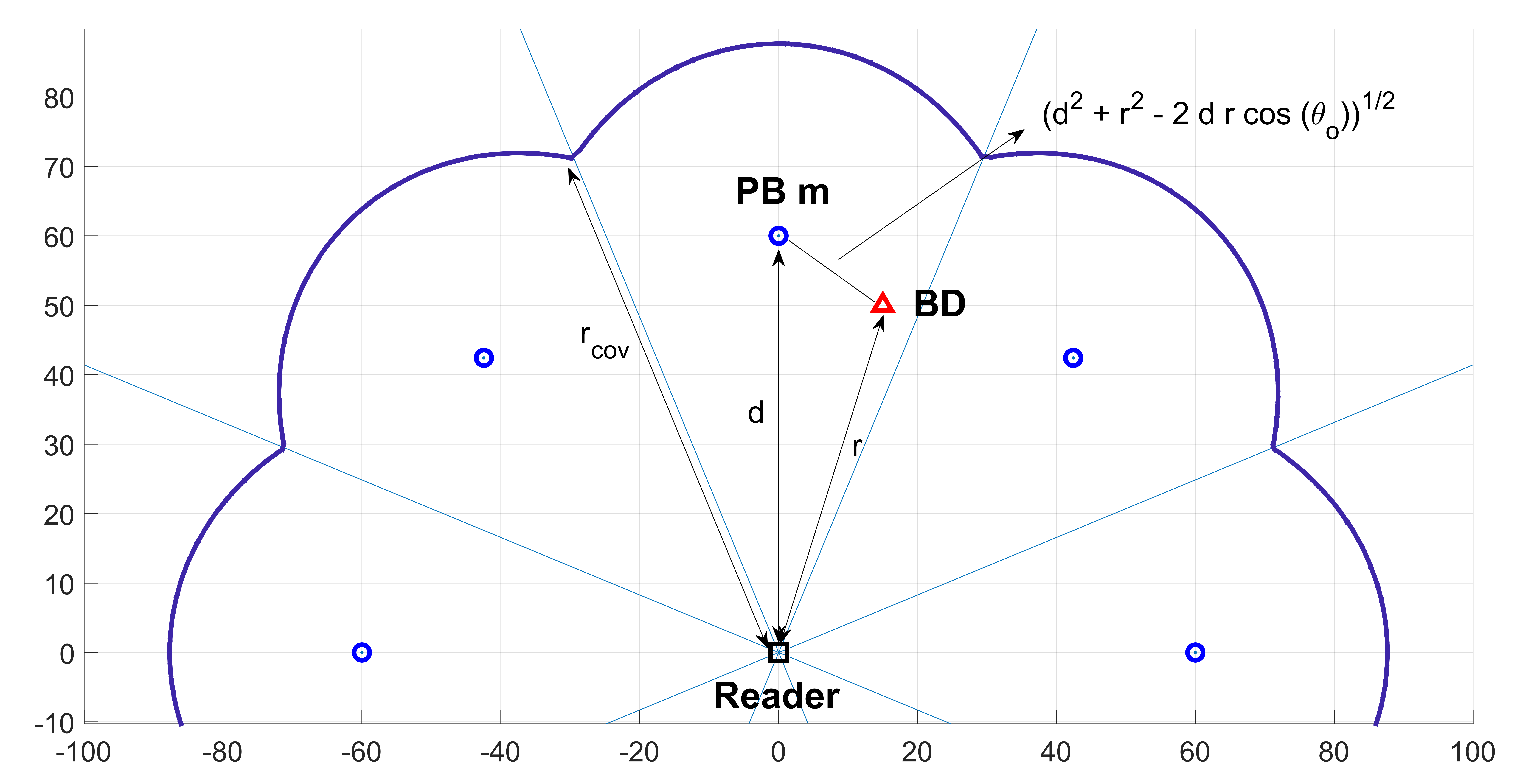}
\caption{Illustration of sectors for each serving PB. Also shown is an example contour within which a given QoS constraint is satisfied. The corresponding GCD is denoted by $r_{cov}$.}
\label{fig:contours}
\end{figure*}

For a BD located at $(r, \theta)$, the PB-BD and BD-reader link distances are given by $\sqrt{d^{2} + r^{2} - 2 d r \cos(\theta_{o})}$ and $r$, respectively, where $\theta_{o} = \theta - \theta_{m}$ denotes the angular offset between the BD and its nearest PB, with $\theta_{m}$ representing the angle relative to the $x$-axis of the nearest PB with index $m$. The per-symbol SNR at the reader can be written as:
\begin{equation}
\gamma = \gamma_{eq} X Y, \hspace{3mm} \gamma_{eq} = \frac{P \eta \beta_{0}^{2} L |b_{0} - b_{1}|^2}{r^{\delta} \left( d^{2} + r^{2} - 2 d r \cos(\theta_{o}) \right)^{\frac{\delta}{2}} N_{0}}, \label{eq:eqSNR_one}
\end{equation}
where $\gamma_{eq}$ can be understood as the equivalent SNR without fading, and $X, Y \sim \Gamma\left(\kappa, \frac{1}{\kappa}\right)$ are independent and identically distributed (i.i.d.) random variables of the squared magnitude of the PB-BD and BD-reader channel coefficients, respectively. Hereafter, the terms shape parameter and Nakagami parameter are used interchangeably in reference to gamma random variables.

The probability density function (pdf) and cumulative distribution function (cdf) of a gamma random variable, with shape parameter $\kappa$ and scale parameter $\frac{1}{\kappa}$, are given by
\begin{align}
f_{X}(x) &= \frac{\kappa^{\kappa} x^{\kappa-1}}{\Gamma(\kappa)} \exp(-\kappa x), \nonumber \\
F_{X}(x) &= 1 - \exp(-\kappa x) \sum_{n=0}^{\kappa-1} \frac{(\kappa x)^{n}}{n!}, \label{eq:gamma_pdf_cdf}
\end{align}
where, for ease of exposition, we have used the cdf expression corresponding to the case where $\kappa$ is an integer. The outage probability at the reader, $P_{out}$, is given by
\begin{align}
P_{out}	&= \mathrm{Pr}(\gamma_{eq} X Y < \gamma_{th}) \hspace{5mm} (= F_{XY}\left(\frac{\gamma_{th}}{\gamma_{eq}}\right)) \nonumber \\
		&= \int_{0}^{\infty} F_{X}\left( \frac{\gamma_{th}}{\gamma_{eq} y} \right) f_{Y}(y) \ \mathrm{d}y \nonumber \\
		&= 1 - \sum_{n=0}^{\kappa_{X}-1} \frac{2}{n! \Gamma(\kappa_{Y})} \left( \frac{\kappa_{X} \kappa_{Y} \gamma_{th}}{\gamma_{eq}} \right)^{\frac{\kappa_{Y} + n}{2}} \nonumber \\
		& \quad \times K_{\kappa_{Y} - n} \left( 2 \sqrt{\frac{\kappa_{X} \kappa_{Y} \gamma_{th}}{\gamma_{eq}}} \right), \label{eq:Pout_one}
\end{align}
where the last expression follows from \cite{Van16}; $\kappa_{X}$ and $\kappa_{Y}$ are the shape parameters of $X$ and $Y$, respectively; and $K_{\nu}(\cdot)$ is the modified Bessel function of the second kind with order $\nu$. We note that the outage probability expression under the case of non-integer $\kappa$ may be derived using the general cdf $F_{X}(x) = \frac{1}{\Gamma(\kappa)} \gamma(\kappa, \kappa x)$, where $\gamma(\cdot, \cdot)$ is the lower incomplete gamma function. However, as we show in the next subsection, we can bypass the complexity of solving Problem (P1) using the complete outage probability expression, by utilizing the concept of equivalent SNR first introduced in (\ref{eq:eqSNR_one}), which applies to any value of $\kappa$.

\subsection{Solution to the PB Placement Problem}

Due to the unwieldy form of $P_{out}$, it is difficult to solve (P1) directly. Thus, we transform this design problem into a purely geometric form, by rearranging the outage probability constraint into an equivalent SNR constraint:
\begin{subequations}
\begin{align}
\text{(P2)}: ~~\max_{d} ~~~&r_{cov} \label{eq:P2a} \\
\mathrm{s.t.}~~~~& \gamma_{eq} > \gamma_{eq,th} \ \mathrm{\&\&} \ \text{(\ref{eq:P1d})}, \label{eq:P2b}
\end{align}
\end{subequations}
where $\gamma_{eq,th}$ is the equivalent SNR threshold that is required to ensure the original outage probability constraint $P_{out} < \varepsilon$ is met. It can be given in terms of $\gamma_{th}$ as
\begin{equation}
\gamma_{eq,th} = \frac{\gamma_{th}}{F_{XY}^{-1}(\varepsilon)},  \label{eq:conv_one}
\end{equation}
where $F_{XY}^{-1}$ is the inverse of the cdf of the product of $X$ and $Y$ from rearranging (\ref{eq:Pout_one}). Unfortunately, no closed form expression exists for $F_{XY}^{-1}$; however, it is monotonic and can be empirically constructed from samples generated from the product of two gamma random variables.

Based on (P2), we present the following result on the characteristic of the coverage area, which significantly simplifies subsequent analysis. For clarity, we rewrite $\gamma_{eq}$ in (\ref{eq:eqSNR_one}) as 
\begin{equation}
\gamma_{eq} = \frac{\alpha}{r^{\delta} \left( d^{2} + r^{2} - 2 d r \cos(\theta_{o}) \right)^{\frac{\delta}{2}}},	\label{eq:eqSNR_one1}
\end{equation}
where $\alpha = \frac{P \eta \beta_{0}^{2} L |b_{0} - b_{1}|^2}{N_{0}}$ denotes the distance-independent terms in the SNR.

\begin{proposition}
For a BD deployed at distance $r$ from the reader, the minimum equivalent SNR is obtained when the BD is located at the edge of the sector, i.e., $\theta_{o} = \frac{\pi}{M}$.
\end{proposition}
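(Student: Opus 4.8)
The plan is to regard the equivalent SNR in (\ref{eq:eqSNR_one1}) as a function of the single free variable $\theta_{o}$, with $r$ and the deployment parameters $d$, $M$, $\alpha$, $\delta$ held fixed, and to show it is monotone on each half of the admissible range. Since each BD is served by its nearest PB, the angular offset obeys $|\theta_{o}| \le \frac{\pi}{M}$, so it suffices to minimize $\gamma_{eq}$ over $\theta_{o} \in [-\frac{\pi}{M}, \frac{\pi}{M}]$.

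First I would note that in (\ref{eq:eqSNR_one1}) neither the numerator $\alpha$ nor the factor $r^{\delta}$ depends on $\theta_{o}$, so minimizing $\gamma_{eq}$ is equivalent to maximizing the quantity $g(\theta_{o}) := d^{2} + r^{2} - 2 d r \cos\theta_{o}$, which is the squared PB--BD link distance (law of cosines) and hence strictly positive. Because $t \mapsto t^{\delta/2}$ is increasing for $t > 0$ (using $\delta \ge 2$), maximizing $g(\theta_{o})^{\delta/2}$ is the same as maximizing $g(\theta_{o})$.

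Next I would differentiate: $g'(\theta_{o}) = 2 d r \sin\theta_{o}$, whose sign is that of $\sin\theta_{o}$. Since $\frac{\pi}{M} \le \pi$ for every $M \ge 1$, we have $\sin\theta_{o} \le 0$ on $[-\frac{\pi}{M}, 0]$ and $\sin\theta_{o} \ge 0$ on $[0, \frac{\pi}{M}]$; thus $g$ decreases then increases on the admissible interval and attains its maximum at the endpoints $\theta_{o} = \pm \frac{\pi}{M}$, i.e., at the edge of the sector. Equivalently, $g(\theta_{o})$ is largest exactly where $\cos\theta_{o}$ is smallest on the interval. Combining with the previous step, $\gamma_{eq}$ attains its minimum over the sector at $\theta_{o} = \frac{\pi}{M}$ (and, by the evenness of $\cos$, equally at $-\frac{\pi}{M}$), which is the claimed result.

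I do not expect a genuine obstacle: the only point needing care is checking that the admissible interval $[-\frac{\pi}{M}, \frac{\pi}{M}]$ sits inside a half-period of the cosine so that the monotonicity argument applies, which is guaranteed by $M \ge 1$. Geometrically, the statement just says that for a BD at a fixed distance $r$ from the reader, pushing it to the boundary between two adjacent PB sectors maximizes its distance to the nearest serving PB, which weakens the power-transfer link and therefore minimizes the received SNR; the fixed BD--reader distance $r$ plays no role in the optimization, which is what makes the reduction clean.
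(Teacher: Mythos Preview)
Your argument is correct and essentially the same as the paper's: both differentiate with respect to $\theta_{o}$ and observe that the sign of the derivative is governed by $\sin\theta_{o}$, so that $\gamma_{eq}$ is even, increasing on $[-\tfrac{\pi}{M},0]$, decreasing on $[0,\tfrac{\pi}{M}]$, and hence minimized at the sector edges. The only cosmetic difference is that you first strip away the $\theta_{o}$-independent factors and the monotone map $t\mapsto t^{\delta/2}$ before differentiating, whereas the paper differentiates the full expression for $\gamma_{eq}$ directly; the underlying idea is identical.
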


\begin{proof}
Without loss of generality, we consider a PB located on the $x$-axis. Then we have
\begin{equation}
\frac{\mathrm{d}\gamma_{eq}}{\mathrm{d}\theta_{o}} = - \frac{\alpha d r^{3} \delta}{(d^{2} r^{2} + r^{4} - 2 d r^{3} \cos(\theta_{o}))^{1 + \frac{\delta}{2}}} \sin(\theta_{o}).	\label{eq:prop1}
\end{equation}
It is evident that all terms in (\ref{eq:prop1}) are positive for $\theta_{o} \in [0, \frac{\pi}{M}]$. For $\theta_{o} \in [-\frac{\pi}{M}, 0]$, we have $-\sin(\theta_{o}) \geq 0$. Hence, the derivative is negative in the former range and positive in the latter, and is thus an odd function passing through $0$ when $\theta_{o} = 0$. As a result, minima for $\gamma_{eq}$ must occur at the edges of the sector, i.e., $\theta_{o} = \pm \frac{\pi}{M}$.
\end{proof}

Having established that the worst-case location for a BD is on the sector edge, we present our main result in the following theorem.

\begin{theorem}
The optimal PB distance $d^{*}$ when each BD is served by its nearest PB is given by
\begin{equation}
d^{*}\!=\!\begin{cases}
		\Big( \frac{\varsigma}{\left( 1 - \cos^{2} \left( \frac{\pi}{M} \right) \right)} \Big)^{\frac{1}{4}} \cos \left( \frac{\pi}{M} \right), & M\!\leq\!\left\lfloor{M_{th}}\right\rfloor, \\
		\left( -\frac{1}{2} \csc^{2} \left( \frac{\pi}{M} \right)  \right)^{\frac{1}{4}} \nonumber \\
		\times \big( \sqrt{\varsigma^{2} \cos^{2}\!\left( \frac{\pi}{M} \right) \left( 9 \cos^{2}\!\left( \frac{\pi}{M} \right)\!-\!8 \right)^{3}} \nonumber \\ + \varsigma \left( 27 \cos^{4}\!\left( \frac{\pi}{M} \right)\!-\!36 \cos^{2}\!\left( \frac{\pi}{M} \right)\!+\!8 \right) \big)^{\frac{1}{4}}, & M\!\geq\!\left\lceil{M_{th}}\right\rceil,
	\end{cases}
\end{equation}
where $\left\lfloor{\cdot}\right\rfloor$ and $\left\lceil{\cdot}\right\rceil$ denote the floor and ceiling functions, respectively; $\varsigma = \left( \frac{\alpha}{\gamma_{eq,th}} \right)^{\frac{2}{\delta}}$; and $M_{th}$ is obtained at the crossover point of the two piecewise expressions above, given by
\begin{equation}
M_{th} = \frac{\pi}{\sec^{-1}(\omega)} \approx 12.36,	\label{eq:M_th}
\end{equation} 
where $\omega$ is the positive real root of the equation $4 x^6 + 2 x^2 - 7 = 0$.
\end{theorem}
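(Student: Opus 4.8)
The plan is to use Proposition~1 to reduce the coverage requirement to a condition along a single ray, convert the SNR constraint into a quartic inequality in the BD range, and then maximize the resulting coverage distance over $d$ via an envelope argument that splits into two regimes according to the shape of that quartic. Concretely: by Proposition~1 a BD at distance $r$ is worst served at angular offset $\theta_{o}=\pi/M$, so it suffices to enforce $\gamma_{eq}(r)\geq\gamma_{eq,th}$ along the sector edge. Abbreviating $c:=\cos(\pi/M)$ and using (\ref{eq:eqSNR_one1}), and since $t\mapsto t^{\delta/2}$ is increasing, $\gamma_{eq}(r)\geq\gamma_{eq,th}$ is equivalent to $g(r)\leq\varsigma$ with $g(r):=r^{2}\left(d^{2}+r^{2}-2drc\right)=r^{4}-2dc\,r^{3}+d^{2}r^{2}$ and $\varsigma=(\alpha/\gamma_{eq,th})^{2/\delta}$. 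Since $g(0)=0$ and every BD closer than $r_{cov}$ must be covered, $r_{cov}(d)$ is the largest $R$ with $g(r)\leq\varsigma$ on $[0,R]$, and Problem~\textbf{P2} becomes $\max_{d>0}r_{cov}(d)$. From $g'(r)=2r\left(2r^{2}-3dc\,r+d^{2}\right)$, whose quadratic factor has discriminant $d^{2}(9c^{2}-8)$, one sees that when $9c^{2}\leq 8$, $g$ is strictly increasing on $[0,\infty)$, so $r_{cov}(d)$ is just the unique root of $g=\varsigma$; when $9c^{2}>8$, $g$ rises to a local maximum at $r_{1}=\mu d$ with $\mu=\tfrac14\left(3c-\sqrt{9c^{2}-8}\right)$, falls to a local minimum, then rises, and $g(r_{1})=Kd^{4}$ with $K:=\mu^{2}\left(\mu^{2}-2\mu c+1\right)$.

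Next comes the envelope argument. Writing $g(r;d)=r^{2}\left((d-rc)^{2}+r^{2}(1-c^{2})\right)$, for fixed $r$ this is a convex quadratic in $d$ minimized at $d=rc$ with value $r^{4}(1-c^{2})$; hence for $r>r_{env}:=\left(\varsigma/(1-c^{2})\right)^{1/4}$ one has $g(r;d)>\varsigma$ for every $d$, so $r_{cov}(d)\leq r_{env}$ always, and the candidate $d=r_{env}c$ is optimal as soon as it achieves $r_{cov}=r_{env}$, i.e. as soon as $g$ evaluated at $d=r_{env}c$ stays $\leq\varsigma$ on all of $[0,r_{env}]$. This is automatic when $g$ is monotone, and when $9c^{2}>8$ it holds exactly when the ridge value $K(r_{env}c)^{4}=c^{4}K\varsigma/(1-c^{2})$ does not exceed $\varsigma$, i.e. when $c^{4}K\leq 1-c^{2}$; in all these cases $d^{*}=r_{env}c=\left(\varsigma/(1-c^{2})\right)^{1/4}\cos(\pi/M)$, the first branch. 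Otherwise (necessarily $9c^{2}>8$ here), the local maximum is the binding obstruction: for $d$ below the threshold $d_{c}$ defined by $Kd_{c}^{4}=\varsigma$ one has $g(r_{1})<\varsigma$ and $r_{cov}(d)$ equals the large root of $g=\varsigma$, which one checks is increasing on $(0,d_{c}]$, while for $d>d_{c}$ the ridge forces $r_{cov}(d)$ down to the small root; hence $d^{*}=d_{c}=(\varsigma/K)^{1/4}$. Setting $u:=9c^{2}-8$, so $1-c^{2}=(1-u)/9$, and using $9c^{2}=u+8$ to simplify $K$ to $\tfrac1{96}\left(8-4u-u^{2}+\sqrt{u^{4}+8u^{3}}\right)$, one rewrites $(\varsigma/K)^{1/4}$ as the second branch.

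For $M_{th}$, the two branches coincide exactly on the feasibility boundary $c^{4}K=1-c^{2}$. Putting $\omega=\sec(\pi/M)=1/c$, clearing denominators and isolating the radical turns this into $(9-8\omega^{2})^{3/2}=32\omega^{8}-32\omega^{6}+8\omega^{4}-36\omega^{2}+27$; squaring, the resulting polynomial is divisible by $4\omega^{6}+2\omega^{2}-7$, and one verifies that its unique positive real root (unique since $4v^{3}+2v-7$ is strictly increasing in $v=\omega^{2}$) satisfies the original positive-sign equation, giving $M_{th}=\pi/\sec^{-1}(\omega)\approx 12.36$.

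The step I expect to be the main obstacle is the regime $c^{4}K>1-c^{2}$ in the envelope argument: one must show rigorously that $r_{cov}(d)$, viewed as the largest root of $g=\varsigma$, increases monotonically up to $d_{c}$ and then drops to the smaller root, so that the optimum genuinely sits where the ridge $g(r_{1})$ first reaches $\varsigma$. This needs a sign analysis of $r_{cov}'(d)=-g_{d}/g_{r}$ along the outer branch together with locating the jump discontinuity at $d_{c}$. The rest — the closed-form simplification of $(\varsigma/K)^{1/4}$ and the reduction of the boundary condition to the sextic $4\omega^{6}+2\omega^{2}-7=0$ — is lengthy but routine algebra.
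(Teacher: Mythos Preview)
Your proposal is correct and structurally mirrors the paper's proof: both reduce to the sector edge via Proposition~1, recast the constraint as the quartic $r^{4}-2dcr^{3}+d^{2}r^{2}=\varsigma$, obtain the first branch by viewing this as a quadratic in $d$ (your envelope $r_{env}$ is exactly the paper's $r_{\max}$ in Lemma~2), and obtain the second branch from the condition that the quartic in $r$ acquires a repeated positive root. The one substantive difference is how that second branch is extracted. The paper (Lemma~3 and Appendix~B) sets the full discriminant of the quartic to zero, obtains a biquadratic in $d^{4}$, and then eliminates the spurious root by checking auxiliary sign conditions. You instead compute the local maximum value $g(r_{1})=Kd^{4}$ directly and set it equal to $\varsigma$, which is more elementary and avoids the discriminant machinery entirely; it also makes transparent why the crossover condition is $c^{4}K=1-c^{2}$, whereas the paper simply intersects the two closed-form expressions. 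Both routes leave the same residual gap you flag --- the monotonicity of $r_{cov}(d)$ on $(0,d_{c}]$ along the outer root --- which the paper handles only heuristically via figures and the phrase ``it can be shown.''
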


\begin{proof}
See Appendix A.
\end{proof}

By straightforward evaluation of the first piecewise expression in Theorem 1, we find that $d^{*} = 0$ for $M = 2$. For $M = 1$, it is easy to see that any movement of the PB from the origin reduces the GCD, as the coverage area becomes irregular with respect to the origin, thus $d^{*} = 0$ also. Two additional results from Theorem 1 concern the asymptotic behavior of $d^{*}$ and $r_{cov}$ as $M$ becomes large, which are presented next.

\begin{corollary}
When $M \rightarrow \infty$, the optimal PB distance converges to
\begin{equation}
d^{*}_{\infty} = 2 \varsigma^{\frac{1}{4}}.  \label{eq:d_infty}
\end{equation}
Moreover, the corresponding GCD, $r_{cov}$, as $M \rightarrow \infty$, can be approximated by
\begin{multline}
r_{cov,\infty} \approx \varsigma^{\frac{1}{4}} + \frac{1}{\sqrt{6}} \bigg( \bigg(\frac{( \sqrt{\varsigma} + (-\varsigma^{\frac{3}{2}})^{\frac{1}{3}} )^{2}}{(-\varsigma^{\frac{3}{2}})^{\frac{1}{3}}}\bigg)^{\frac{1}{2}} \\ + \bigg(\frac{4 \varsigma + \frac{\varsigma^{2}}{(-\varsigma^{\frac{3}{2}})^{\frac{2}{3}}} + (-\varsigma^{\frac{3}{2}})^{\frac{2}{3}}}{\sqrt{\varsigma}}\bigg)^{\frac{1}{2}} \bigg).	\label{eq:corollary1}
\end{multline}
\end{corollary}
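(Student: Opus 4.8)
The plan is to read both limits off the second branch of the closed-form expression for $d^{*}$ in Theorem~1 — the branch valid for all large $M$, since $M \to \infty > \lceil M_{th}\rceil$ — together with the geometric picture of the coverage region from the proof of that theorem.

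For $d^{*}_{\infty}$ I would put $c = \cos(\pi/M)$ and let $x := \pi/M \to 0^{+}$ in the $M \geq \lceil M_{th}\rceil$ branch. The factor $\bigl(-\tfrac{1}{2}\csc^{2}x\bigr)^{1/4}$ blows up like $x^{-1/2}$, while the other factor vanishes, since at $c = 1$ we have $\sqrt{\varsigma^{2}c^{2}(9c^{2}-8)^{3}} + \varsigma(27c^{4}-36c^{2}+8) = \varsigma - \varsigma = 0$; so the limit is an indeterminate $0 \times \infty$ form. I would resolve it with $c = 1 - \tfrac{x^{2}}{2} + O(x^{4})$, which gives $9c^{2}-8 = 1 - 9x^{2} + O(x^{4})$, hence $\sqrt{\varsigma^{2}c^{2}(9c^{2}-8)^{3}} = \varsigma(1-14x^{2}) + O(x^{4})$ and $\varsigma(27c^{4}-36c^{2}+8) = \varsigma(-1-18x^{2}) + O(x^{4})$, so the bracketed factor equals $-32\varsigma x^{2} + O(x^{4})$. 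Since $-\tfrac{1}{2}\csc^{2}x = -\tfrac{1}{2x^{2}} + O(1)$ and both factors are negative (so their product under the fourth root is positive), the product tends to $16\varsigma$, and $d^{*} \to (16\varsigma)^{1/4} = 2\varsigma^{1/4}$, which is (\ref{eq:d_infty}). A cleaner cross-check: as $c \to 1$ the worst-case SNR denominator in $\gamma_{eq}$ reduces to $r^{\delta}(r-d)^{\delta}$, so the governing quartic factor is $g(r,d) = r^{2}(r-d)^{2}$, whose interior local maximum over $r$ is at $r = d/2$ with value $d^{4}/16$; the optimality condition from the proof of Theorem~1 in this regime (this local maximum equals $\varsigma$) again yields $d^{4}/16 = \varsigma$.

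For $r_{cov,\infty}$ I would use that, by Proposition~1 and the proof of Theorem~1, $r_{cov}$ is the largest positive root $r$ of the worst-case condition $\gamma_{eq} = \gamma_{eq,th}$ at $\theta_{o} = \pi/M$, i.e.\ of $r^{2}\bigl(d^{2} + r^{2} - 2dr\cos(\pi/M)\bigr) = \varsigma$ evaluated at $d = d^{*}$. Solving this quartic in $r$ by the standard (Ferrari) route brings in a resolvent cubic, whose Cardano solution is the source of the cube-root terms $(-\varsigma^{3/2})^{1/3}$ in (\ref{eq:corollary1}); passing to the limit $M \to \infty$ — so $\cos(\pi/M) \to 1$ and $d^{*} \to 2\varsigma^{1/4}$ — then produces (\ref{eq:corollary1}). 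As a sanity check, at $\cos(\pi/M) = 1$ the quartic is $r^{2}(r - 2\varsigma^{1/4})^{2} = \varsigma$, for which $\varsigma^{1/4}$ is a double root, leaving $r^{2} - 2\varsigma^{1/4}r - \sqrt{\varsigma} = 0$ with positive root $(1+\sqrt{2})\varsigma^{1/4}$; evaluating (\ref{eq:corollary1}) with the principal complex cube root for $(-\varsigma^{3/2})^{1/3}$ returns this same value.

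The step I expect to be the main obstacle is the $0 \times \infty$ limit in the first part: one must expand $\cos(\pi/M)$ to second order and check that the two dominant pieces of the bracketed factor cancel only through $O(x^{2})$ with surviving coefficient exactly $-32$, which is what forces $d^{*}_{\infty} = 2\varsigma^{1/4}$ rather than some other multiple of $\varsigma^{1/4}$. Everything after that — pushing the limits into the quartic and reading off its largest root — is routine, the only nuisance being to recognize that the computer-algebra-style expression (\ref{eq:corollary1}) collapses to $(1+\sqrt{2})\varsigma^{1/4}$, which is presumably why the corollary phrases it as an approximation.
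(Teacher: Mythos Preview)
Your approach is essentially the one the paper takes: the paper simply says that (\ref{eq:d_infty}) is obtained by letting $M\to\infty$ in the second branch of Theorem~1, and that (\ref{eq:corollary1}) is obtained by solving the quartic (\ref{eq:f(r)}) with $d$ set to $d^{*}_{\infty}$ and then taking the limit --- exactly your plan, only you have filled in the asymptotic bookkeeping the paper omits, and your $0\times\infty$ expansion is correct.

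One point of divergence worth flagging: the paper's text describes the relevant root as ``the smallest positive real root'' of (\ref{eq:f(r)}), whereas you take the \emph{largest} positive root. Your choice is the one consistent with both the geometry (at $d=d^{*}$ in the large-$M$ regime the double root is where the coverage gap has zero width, so the GCD is still $r_{\max}$) and with the formula itself, since evaluating (\ref{eq:corollary1}) with the principal complex cube root indeed gives $(1+\sqrt{2})\varsigma^{1/4}=r_{\max}$, not the double root $\varsigma^{1/4}$; the paper's verbal description appears to be a slip.
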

Equation (\ref{eq:d_infty}) is obtained from the expression of $d^{*}$ in Theorem 1 by letting $M \rightarrow \infty$.  Equation (\ref{eq:corollary1}) is obtained by solving (\ref{eq:f(r)}) in Appendix A, taking the expression of the root which represents the GCD (i.e., the smallest positive real root) with the value of $d$ set to that obtained in (\ref{eq:d_infty}), and then taking the limit as $M \rightarrow \infty$. Note that equation (\ref{eq:corollary1}) indicates an upper bound on the achievable guaranteed coverage distance for any value of $M$.


\section{PB Placement Analysis: Multiple Serving PBs}

In this section, we extend the previous analysis to the case where each BD is served by multiple nearest PBs. In Section II, we have used $\Phi$ to denote the set of indices of PBs serving a particular BD. Hereafter, we denote the number of nearest serving PBs as $S$ (i.e., the number of elements in $\Phi$), and consider $2 \leq S \leq M$ in this section. The outage probability expression for an arbitrarily located BD is presented first, followed by an algorithm for determining $d^{*}$.

\subsection{Outage Probability of an Arbitrarily Located BD}

The outage probability expression for the case where each BD performs backscatter transmission using the CW from $S$ nearest PBs is derived similarly compared to Section III-A. The received energy at a BD from the set of serving PBs can be given by
\begin{equation}
E_{r, S} = \sum_{m=0}^{S-1} \left[ \frac{P \beta_{0}}{\left( d^{2}\!+\!r^{2}\!-\!2 d r \cos\left( \theta_{o}\!+\!\frac{2 \pi m}{M} \right) \right)^{\frac{\delta}{2}}} X_{m} \right], \label{eq:energyReceived_multi}
\end{equation}
where $X_{m} \sim \Gamma (\kappa_{m}, \frac{1}{\kappa_{m}})$ is the fading random variable indexed according to the PB number. 

While semi-closed-form expressions \cite{Mos85} exist for the pdf and cdf of the sum of i.i.d. gamma random variables, a common approximation is to model the sum as another gamma random variable \cite{Cov14}, denoted here by $X_{\Sigma}$, where given $X_{i} \sim (\kappa_{i}, \tau_{i})$,
\begin{equation}
X_{\Sigma} = \sum_{i=1}^{n} X_{i} \sim \Gamma(\kappa_{\Sigma}, \tau_{\Sigma}), \label{eq:approx_gamma_A}
\end{equation}
with
\begin{equation}
\kappa_{\Sigma} = \frac{\mu^{2}}{\sum_{i=1}^{n} \kappa_{i} \tau_{i}^{2}}, \tau_{\Sigma} = \frac{\sum_{i=1}^{n} \kappa_{i} \tau_{i}^{2}}{\mu}, \textrm{and} \ \mu = \sum_{i=1}^{n} \kappa_{i} \tau_{i}.		\label{eq:approx_gamma_B}
\end{equation}
The combined shape parameter $\kappa_{\Sigma}$ may not be integer; hence, the outage probability is given as
\begin{equation}
P_{out} = \mathrm{Pr} (\gamma < \gamma_{th}) \approx F_{Z}\left( \frac{\gamma_{th}}{\beta} \right), \label{eq:Pout_multi}
\end{equation}
where $\beta$ is the equivalent path loss component of the SNR at the reader, and $Z$ is the fading random variable associated with $\gamma$, comprised of the product of random variables $X_{\Sigma}$ and $Y$. The cdf of $Z$ is given by
\begin{multline}
F_{Z}(z) = \frac{\pi \csc(\pi (\kappa_{X} - \kappa_{Y}))}{\Gamma(\kappa_{X}) \Gamma(\kappa_{Y})} \\ \times \left[ \Gamma(\kappa_{X}) (\kappa_{X} \kappa_{Y} z)^{\kappa_{X}} {}_{1}\tilde{F}_{2} (\kappa_{X}; \kappa_{X}\!+\!1, \kappa_{X}\!-\!\kappa_{Y}\!+\!1; \kappa_{X} \kappa_{Y} z) \right. \\
\left. + \Gamma(\kappa_{Y}) (\kappa_{X} \kappa_{Y} z)^{\kappa_{Y}} {}_{1}\tilde{F}_{2} (\kappa_{Y}; \kappa_{Y}\!+\!1, \kappa_{Y}\!-\!\kappa_{X}\!+\!1; \kappa_{X} \kappa_{Y} z) \right], \label{eq:productgamma_cdf}
\end{multline}
where $\kappa_{X}$ and $\kappa_{Y}$ are the shape parameters of $X_{\Sigma}$ and $Y$, respectively, and ${}_{p}\tilde{F}_{q}(\cdot; \cdot; \cdot)$ represents the regularized hypergeometric function. 

\subsection{Solution to the PB Placement Problem}

In this subsection, we outline methods to determine $d^{*}$ for the multiple serving PB case. Equation (\ref{eq:Pout_multi}) does not facilitate efficient solving of $d^{*}$. Therefore, we convert the multiple serving PB case to its equivalent geometric problem. The equivalent SNR threshold $\gamma_{eq,th}$ can then be defined similarly as (\ref{eq:conv_one}):
\begin{equation}
\gamma_{eq,th} = \frac{\gamma_{th}}{F_{Z}^{-1}(\varepsilon)},  \label{eq:conv_multi}
\end{equation}
where $F_{Z}^{-1}$ is the inverse of the cdf in (\ref{eq:productgamma_cdf}), and can be generated empirically. Similar to the single-PB case in (\ref{eq:conv_one}), this transformation is applicable to general Nakagami fading channels where the shape parameter may be non-integer. Note that $\gamma_{eq,th}$ here is an \textit{approximation} of the exact equivalent SNR threshold, due to the approximation used in summing the fading random variables in (\ref{eq:approx_gamma_A}).

We now present the corresponding result to Proposition 1 for $S \geq 2$ as follows. 

\begin{proposition}
For a BD deployed at distance $r$ from the reader, when $S \geq 2$, the minimum $\gamma_{eq}$ is obtained when the BD is located at the edge of the sector, i.e., $\theta_{o} = \frac{\pi}{M}$.
\end{proposition}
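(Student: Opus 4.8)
The plan is to reduce the claim, exactly as in the proof of Proposition~1, to a geometric statement about the fading‑free SNR, and then to exploit that the $S$ serving PBs are angularly consecutive and all lie on one side of the nearest PB. First I would record $\gamma_{eq}$ explicitly for the multi‑PB case. Since $\gamma_{eq}$ is the SNR with the fading random variables replaced by their means, and since the aggregate mean power harvested from the serving set in (\ref{eq:energyReceived_multi}) is the sum of the per‑link mean powers, $\gamma_{eq}$ is simply the sum of the single‑PB expressions in (\ref{eq:eqSNR_one1}):
\begin{equation}
\gamma_{eq}(\theta_{o}) = \frac{\alpha}{r^{\delta}} \sum_{m=0}^{S-1} \Big( d^{2} + r^{2} - 2 d r \cos\big(\theta_{o} + \tfrac{2\pi m}{M}\big) \Big)^{-\frac{\delta}{2}}, \qquad \theta_{o} \in \Big[-\tfrac{\pi}{M}, \tfrac{\pi}{M}\Big].
\end{equation}
(The Nakagami‑sum approximation of Section~IV-A only enters $\gamma_{eq,th}$ through $\kappa_{\Sigma}$; it does not affect $\gamma_{eq}$ itself, so it is irrelevant here.) It therefore suffices to show that this sum is minimized over the sector at $\theta_{o} = \pi/M$.

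The key fact, obtained by applying the per‑PB computation (\ref{eq:prop1}) to each serving PB in turn, is that the $m$‑th summand is a strictly decreasing function of the angular distance $\rho_{m}(\theta_{o})\in[0,\pi]$ between the BD and serving PB $m$. Because the $S$ serving PBs occupy the consecutive slots $0, -2\pi/M, \dots, -2\pi(S-1)/M$, all on one side of the nearest PB, the point $\theta_{o}=\pi/M$ is, among all $\theta_{o}$ in the sector, the one that \emph{simultaneously} maximizes every $\rho_{m}$: indeed $\rho_{0}(\theta_{o}) = |\theta_{o}| \le \pi/M = \rho_{0}(\pi/M)$, and for $m \ge 1$ (assuming $2S-1 \le M$) $\rho_{m}(\theta_{o}) = \theta_{o} + 2\pi m/M \le \pi/M + 2\pi m/M = \rho_{m}(\pi/M)$. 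Hence every summand, and thus $\gamma_{eq}$ itself, attains its minimum over the sector at $\theta_{o}=\pi/M$ — uniquely so for $S \ge 2$, since the $m \ge 1$ terms strictly penalize any $\theta_{o} < \pi/M$. Equivalently, one can check $\mathrm{d}\gamma_{eq}/\mathrm{d}\theta_{o} \le 0$ on $[0,\pi/M]$ term by term from (\ref{eq:prop1}), and bound $\gamma_{eq}(\theta_{o}) \ge \gamma_{eq}(\pi/M)$ on $[-\pi/M,0]$ by the same distance comparison.

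The step I expect to be the real obstacle is the regime $2S-1 > M$, where the one‑sided cluster wraps past the antipode of the nearest PB. There some serving PBs lie on the counterclockwise side, their angular distances are no longer maximized at $\theta_{o}=\pi/M$, and the corresponding terms of $\mathrm{d}\gamma_{eq}/\mathrm{d}\theta_{o}$ change sign on $[0,\pi/M]$, so the simultaneous‑maximization argument breaks. For this case I would instead show that the near‑side serving PBs dominate — e.g. by pairing each wrapped‑around PB with a near one (or with its mirror image about the line through the nearest PB) and bounding the combined contribution — or, in the extreme $S=M$, invoke a power‑mean/rearrangement inequality to show that $\sum_{m}\big(d^{2}+r^{2}-2dr\cos(\theta_{o}+\tfrac{2\pi m}{M})\big)^{-\delta/2}$ over $M$ equally spaced arguments is smallest when none of them equals $0$, i.e. at $\theta_{o}=\pm\pi/M$. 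Apart from this bookkeeping for large $S$, the proof is a direct extension of that of Proposition~1.
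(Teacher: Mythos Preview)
Your approach is essentially the paper's: both write $\gamma_{eq}$ as the sum in (\ref{eq:approxSNR_multi}), differentiate in $\theta_{o}$, and argue term by term. The difference is that the paper's proof asserts that \emph{each} term of $\mathrm{d}\gamma_{eq}/\mathrm{d}\theta_{o}$ in (\ref{eq:prop2}) is an odd function of $\theta_{o}$, hence $\gamma_{eq}$ is symmetric with maximum at $\theta_{o}=0$; but that oddness claim is literally true only for the $m=0$ summand, since $\sin(\theta_{o}+2\pi m/M)$ is not odd in $\theta_{o}$ for $m\ge1$. Your version repairs this by instead showing each summand is non-increasing on $[0,\pi/M]$ (equivalently, every angular distance $\rho_{m}$ is maximized at $\theta_{o}=\pi/M$) and handling $[-\pi/M,0]$ by a direct distance comparison, which is the correct way to make the paper's sketch rigorous.

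Your flagged obstacle for $2S-1>M$ is genuine and is simply not addressed in the paper's proof; the paper's argument, read literally, already breaks at $m\ge1$ and so never confronts the wrap-around issue. For the regimes the paper actually uses ($S=1,2$ in closed form, and $S=M$ via the symmetric expressions (\ref{eq:evenS})--(\ref{eq:oddS})), your argument covers the first two, and for $S=M$ the sum over all $M$ equally spaced PBs is genuinely invariant under $\theta_{o}\mapsto-\theta_{o}$, so the paper's symmetry claim is correct there even if the term-by-term justification is not. Your plan to treat the intermediate large-$S$ case by pairing or a rearrangement-type inequality is reasonable and goes beyond what the paper supplies.
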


\begin{proof}
When each BD is served by multiple nearby PBs, the SNR expression is derived from the summation of the received signal components from all serving PBs, and can be given by:
\begin{align}
\gamma_{eq} &= \sum_{m=0}^{S-1} \frac{\alpha}{\left( d^{2} r^{2} + r^{4} - 2 d r^{3} \cos\left(\theta_{o} + \frac{2 \pi m}{M}\right) \right)^{\frac{\delta}{2}}}. \label{eq:approxSNR_multi}
\end{align}
The first derivative of (\ref{eq:approxSNR_multi}) is given by
\begin{equation}
\frac{\mathrm{d} \gamma_{eq}}{\mathrm{d} \theta_{o}} = \sum_{m=0}^{S-1} \frac{- \alpha d r^{3} \delta \sin\left( \theta_{o} + \frac{2 \pi m}{M} \right)}{r \left( d^{2} + r^{2} - 2 d r \cos\left(\theta_{o} + \frac{2 \pi m}{M}\right) \right)^{\frac{\delta}{2} + 1}}. \label{eq:prop2}
\end{equation}
Similar to Proposition 1, it can be shown that each term in the summation of (\ref{eq:prop2}) is an odd function that is positive for negative $\theta_{o}$ and vice versa, for any $\delta$. As a result, (\ref{eq:approxSNR_multi}) is symmetric about $\theta_{o} = 0$ and is monotonically increasing for negative $\theta_{o}$, and vice versa, with maximum occurring at $\theta_{o} = 0$. Therefore, minima must occur at $\theta_{o} = \pm \frac{\pi}{M}$.
\end{proof}

Next, a general algorithm for determining $d^{*}$ is first presented, followed by discussions on some specific values of $S$. 

\underline{\textit{General Algorithm}}: First, we perform linear search to find the value(s) of $r$ along the sector edge where (\ref{eq:approxSNR_multi}) equals $\gamma_{eq, th}$. This is equivalent to obtaining $r$ such that (\ref{eq:Pout_multi}) equals $\varepsilon$. While we aim to obtain an expression for $d$ in terms of $r$, there may be multiple values of $r$ where equality is attained. Thus, the correct solution of $r$ must be chosen, such that $\gamma_{eq}$ is greater than $\gamma_{eq, th}$ (or $P_{out}$ is greater than $\varepsilon$) up to that point. That is, the correct solution of $r$ is the smallest positive root of (\ref{eq:Pout_multi}). If an expression exists for the correct solution of $r$ for all $d$, it can be rearranged to be in the form $d(r)$, from which $d^{*}$ can be found via algebraic manipulations; otherwise numerical search is performed. The detailed steps are provided in Algorithm 1.

\begin{algorithm}
	\caption{Computation of the optimal PB placement distance $d^{*}$}
	\begin{algorithmic}[1]
		\REQUIRE Number of PBs, $M$; number of serving PBs, $S$; range of BD distances, $r \in [0, r_{upper}]$; SNR threshold, $\gamma_{th}$; outage probability threshold, $\varepsilon$.
		\STATE Perform linear search over the range of $r \in [0, r_{upper}]$ and compute (\ref{eq:approxSNR_multi}) for each $r$ to find intersection points of (\ref{eq:approxSNR_multi}) with $\gamma_{eq, th}$ in (\ref{eq:conv_multi}).
		\STATE For each $r$ in Step 1, pick the smallest positive intersection point and set the resulting expression or sequence of intersection points as $r_{cov}(d)$.
		\STATE Compute first derivative of $r_{cov}(d)$ numerically (or by algebraic manipulation, where an expression exists for $r_{cov}(d)$) and perform linear search over $d \in [0, \infty)$ to find the roots.
		\STATE Pick the smallest positive solution to Step 3 and set as $d^{*}$.
		\RETURN $d^{*}$
	\end{algorithmic}
\end{algorithm}

\underline{\textit{The special case of $S = 2$:}} Proposition 2 applies for any integer $S$; that is, a BD located on the sector edge at distance $r$ from the reader achieves the lowest equivalent SNR compared to any other location with distance $r$ from the reader, regardless of the number of serving PBs. Applying this observation to the case where a BD on the sector edge is served by two nearest PBs, the achievable SNR when $S = 2$ is twice that of $S = 1$. Hence, when $S = 2$, 
\begin{equation}
\gamma_{eq} = \frac{2 \alpha}{r^{\delta} \left( d^{2} + r^{2} - 2 d r \cos\left(\frac{\pi}{M}\right) \right)^{\frac{\delta}{2}}}.	\label{eq:eqSNR_two}
\end{equation}
Therefore, we can directly invoke Theorem 1 to compute $d^{*}$ by setting $\varsigma = \left(\!\frac{2 \alpha}{\gamma_{eq,th}}\!\right)^{\frac{2}{\delta}}$, where $\gamma_{eq,th}$ refers to the equivalent SNR threshold for $S = 2$. Note that $\gamma_{eq,th}$ is constant for any values of $r$ and $d$ only when $S = 1$ or $2$, as $F_{XY}^{-1}$ in (\ref{eq:conv_one}) does not depend on the path losses between individual PBs and the BD. Similarly, Corollary 1 also holds. 

\underline{\textit{The special case of $S = M$:}} For $S > 2$, we observe that $\gamma_{eq}$ in (\ref{eq:conv_multi}) needs to be computed for every combination of $r$ and $d$, as random variable $X_{\Sigma}$ accounts for the path losses between each PB and the BD, and thus closed-form expressions for $d^{*}$ and $r_{cov}$ are unlikely to exist. Here we focus on the case of $S = M$, as it theoretically represents the best possible coverage performance. We can exploit symmetry about the sector edge, on which the BD is located, to deduce the total energy contribution of the PBs beyond the first and second nearest PBs. For example, the third and fourth nearest PBs have angular distance $\frac{3\pi}{M}$ from the BD. The fifth and sixth PBs have angular distance $\frac{5\pi}{M}$ from the BD, and so on. Therefore, the general equivalent SNR expressions for even and odd $M$ are 
\begin{align}
\gamma_{eq} &= 2 \alpha \sum_{m=0}^{M/2-1} r^{-\delta} \left( d^{2} + r^{2} - 2 d r \cos\left(\frac{(m+1) \pi}{M}\right) \right)^{-\frac{\delta}{2}}, 	\label{eq:evenS} \\
\gamma_{eq} &= \frac{\alpha}{r^{\delta} \left( d^{2} + r^{2} + 2 d r \right)^{\frac{\delta}{2}}} \nonumber \\
		& \quad + 2 \alpha\!\sum_{m=0}^{\left\lfloor{M}/2\right\rfloor-1} r^{-\delta}\!\left(\!d^{2}\!+\!r^{2}\!-\!2 d r \cos\left(\frac{(m+1) \pi}{M}\right)\! \right)^{-\frac{\delta}{2}}. 	\label{eq:oddS}
\end{align}
Noting the symmetry of the PBs' individual contributions, $d^{*}$ may be approximated by setting $\varsigma = \left(\!\frac{\chi \alpha}{\gamma_{eq,th}}\!\right)^{\frac{2}{\delta}}$, where $\chi$ is the ratio between $\gamma_{eq}$ under $S = M$ to $\gamma_{eq}$ under $S = 1$.


\section{Numerical Results}

In this section, we numerically evaluate the impacts of the number of PBs, the PB distance and channel parameters on the GCD of the bistatic backscatter network. Cases where each BD is served by one and multiple nearest PBs are considered. Unless otherwise specified, the carrier frequencies and transmit powers of all PBs are $915$ MHz and $27$ dBm, respectively, and BDs are modeled as semi-passive devices. We adopt $\delta = 2.4$ as the baseline path loss exponent, which is a typical value for natural environments such as various types of forests and grassland where the nodes are located between $1$-$2$ m above ground \cite{GF13}. Where fading is considered, Nakagami fading with parameter $4$ is assumed; the SNR threshold for all tags is $\gamma_{th} = 5$ dB and the QoS constraint is $\varepsilon = 0.05$; and the receive noise is $\sigma^{2} = -110$ dBm. The tag reflection efficiency is $\eta = 0.49$ and the antenna gain is $2.1$ dBi; the structural mode is set to $A = 0.6047 + j0.5042$ for all BDs \cite{Kim14}; and the reflection coefficients are set to $\{G_{0}, G_{1}\} = \{A, -\frac{A}{|A|}\}$. BDs are attached to electrically non-conductive materials, resulting in negligible gain penalty; and the polarization mismatch loss is $0.8$ for both the forward and backscatter links, to account for imperfections in BD orientations. Each symbol transmitted by a BD occurs over $L = 20$ source samples.

\subsection{Optimal PB Distance and GCD Characteristics}

\begin{figure}[h!]
	\centering
	\begin{subfigure}{0.495\textwidth}
		\centering
		\includegraphics[width=3.5in]{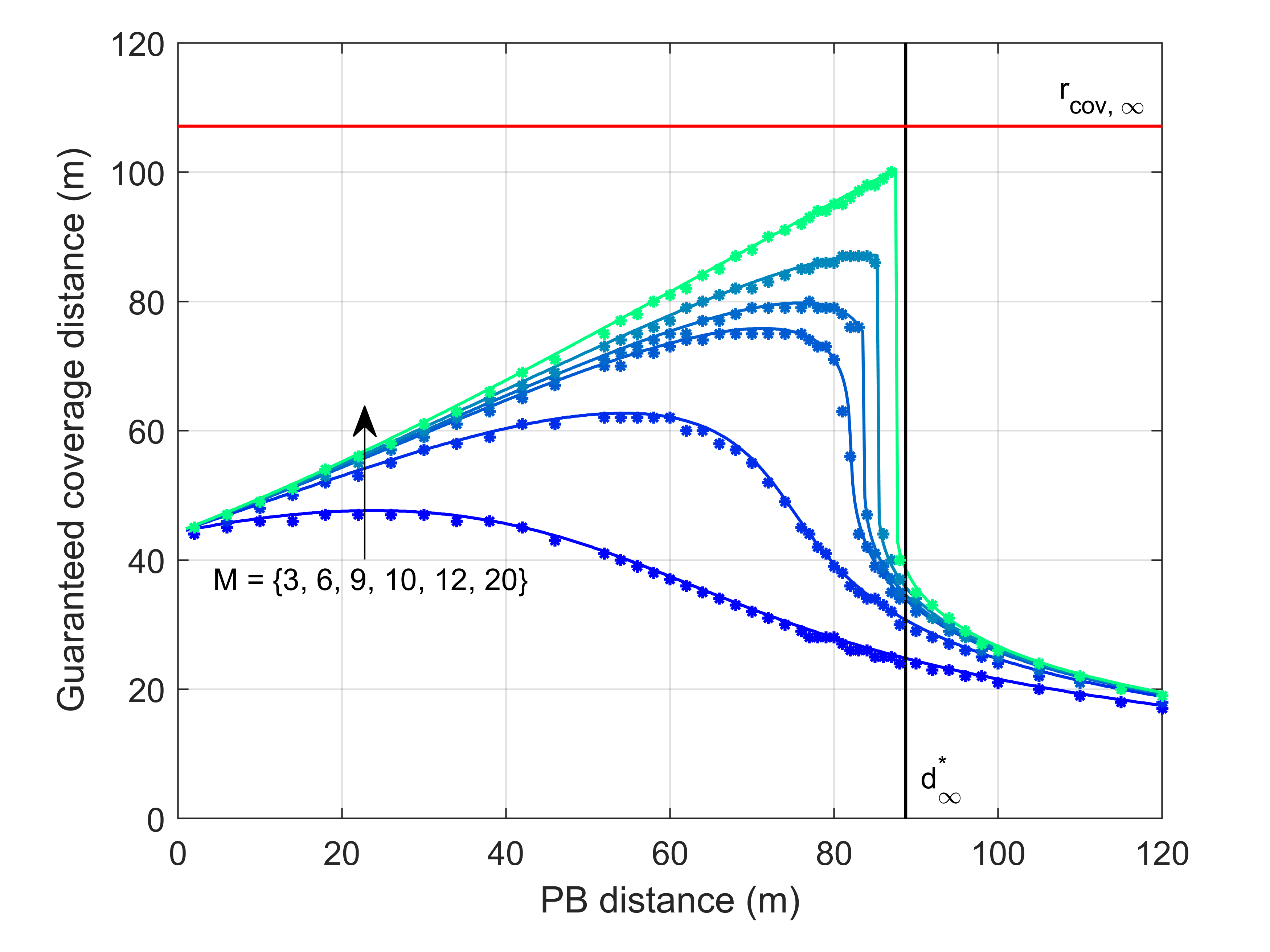}
		\caption{Algorithm 1 vs. Simulation}
	\end{subfigure} \hfill
	\begin{subfigure}{0.495\textwidth}
		\centering
		\includegraphics[width=3.5in]{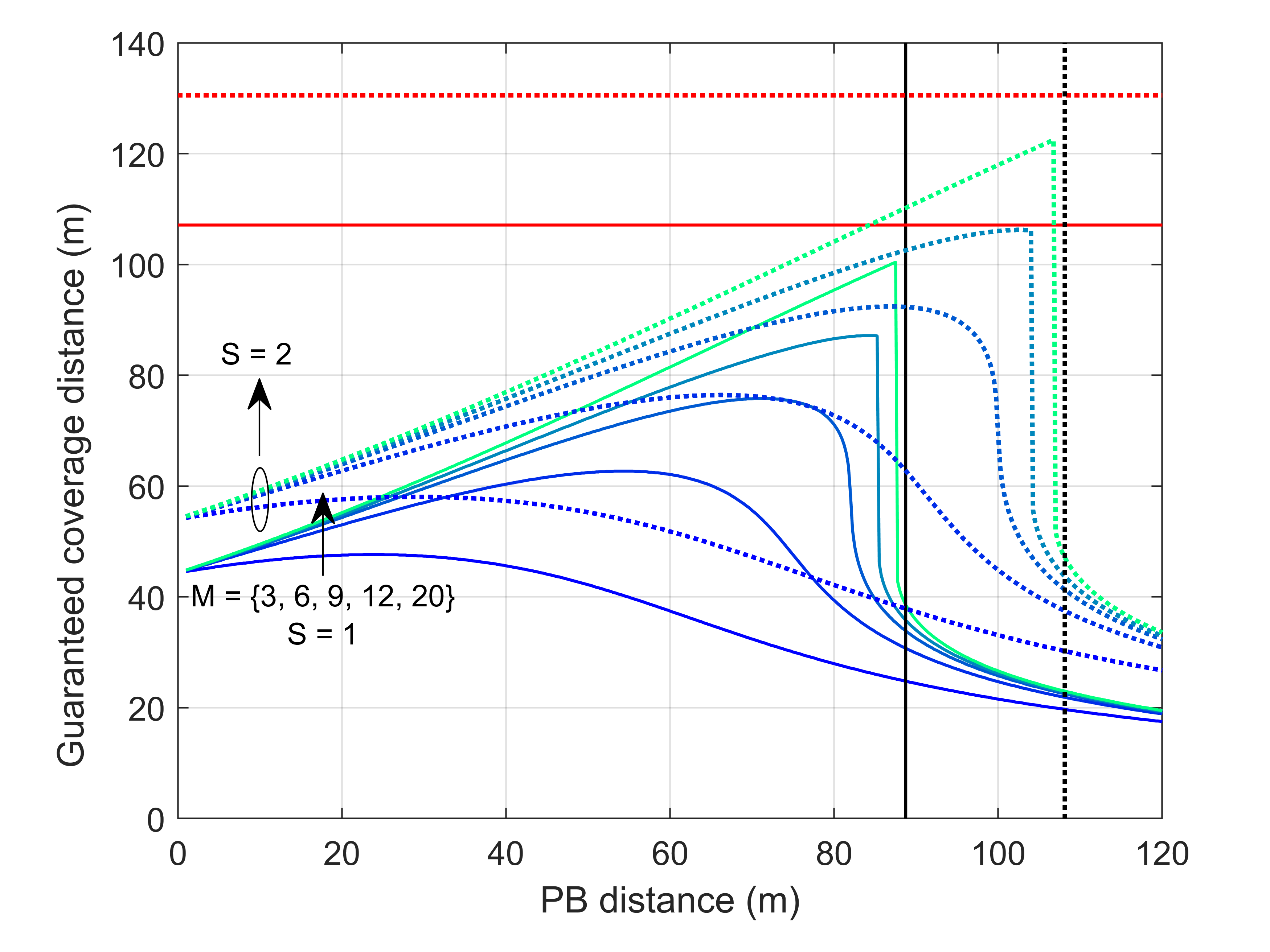}
		\caption{$S = 1$ vs. $S = 2$}
	\end{subfigure}
	\caption{Behavior of the GCD for varying $M$, and at various PB distances.}
	\label{fig:singlePBEnsemble}
\end{figure}

Fig. \ref{fig:singlePBEnsemble}(a) shows the GCD as a function of $d$, where $S = 1$, obtained via both simulation (asterisks) and Algorithm 1 (more specifically, the first two steps of Algorithm 1, which obtains the GCD for a given $d$; solid lines). The simulation results are obtained using $2 \times 10^{4}$ realizations of BD locations and channel coefficients. It is evident that the results obtained using Algorithm 1 match the simulation results. The existence of an optimal PB distance is confirmed, where the optimal $d$ that maximizes the GCD in each curve matches with $d^{*}$ obtained using Theorem 1. The GCD increases with $d$ up to a maximum, before experiencing a sharp drop. The sharpness of the drop in the GCD beyond $d^{*}$ increases with the total number of PBs. Therefore, in practice, it is appropriate to deploy PBs at or at slightly less than the optimal distance, without much reduction in the GCD, but it is unwise to deploy PBs beyond the optimal distance. The upper bounds to $d^{*}$ and the GCD, presented in Corollary 1, are shown using the vertical (black) and horizontal (red) lines, respectively, and it is clear that both quantities converge to their upper bounds as $M$ increases. Hereafter, we present results obtained using Algorithm 1 only.

The characteristics of the GCD for the case of $S = 2$ is shown in Fig. \ref{fig:singlePBEnsemble}(b). One can observe the improvement in the GCD arising from one additional serving PB where, for example, under the case of $M = 6$ with $d = 50$ m, the GCD with a single serving PB is around $63$ m; whereas the GCD with two serving PBs increases to $76$ m, i.e., around $20\%$ improvement. We can also see that the asymptotic GCD increases from $108$ m to $130$ m, i.e., a $20\%$ improvement.

\begin{figure}[h!]
	\centering
	\begin{subfigure}{0.495\textwidth}
		\centering
		\includegraphics[width=3.5in]{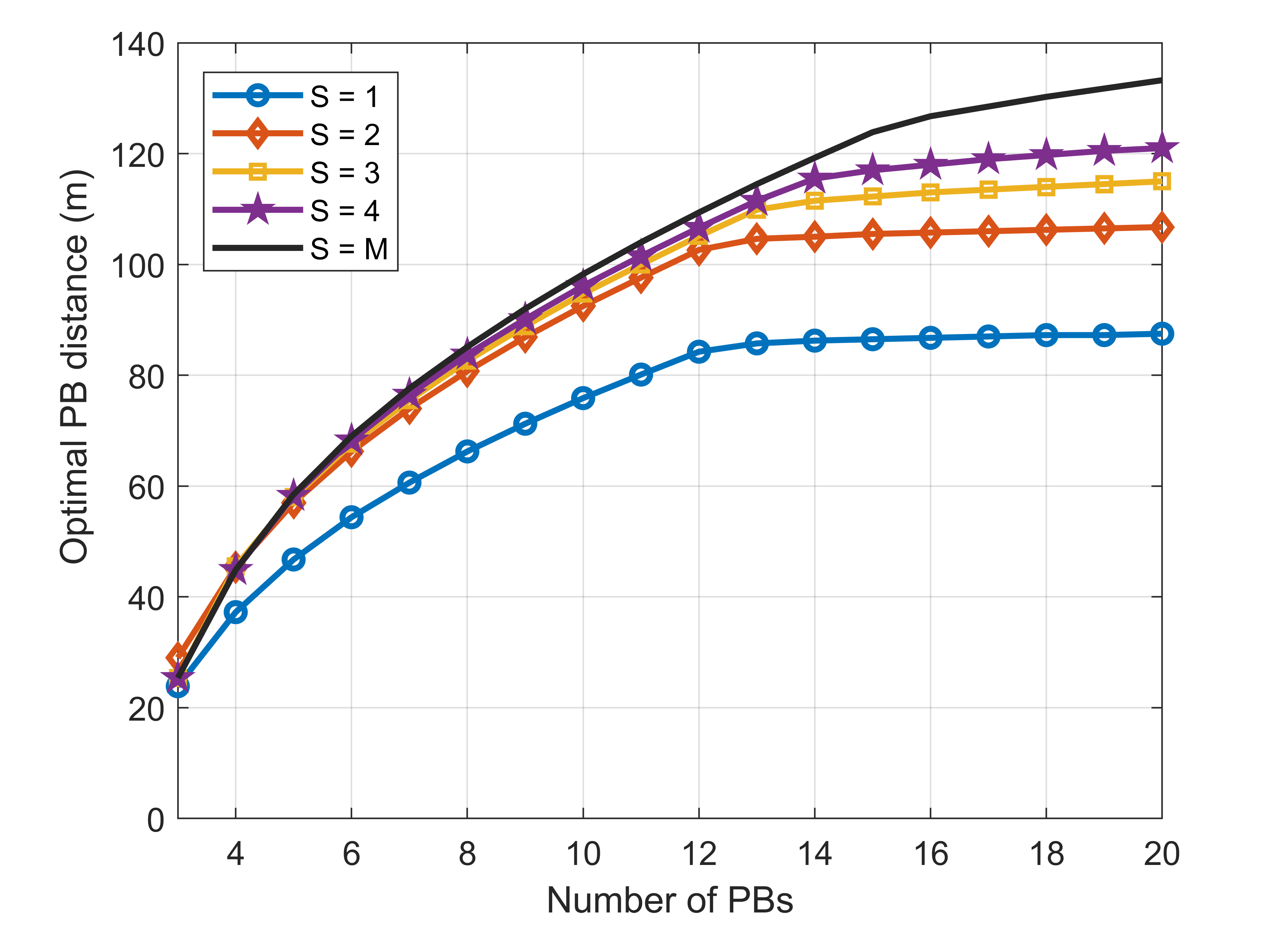}
		\caption{Optimal PB distance}
	\end{subfigure} \hfill
	\begin{subfigure}{0.495\textwidth}
		\centering
		\includegraphics[width=3.5in]{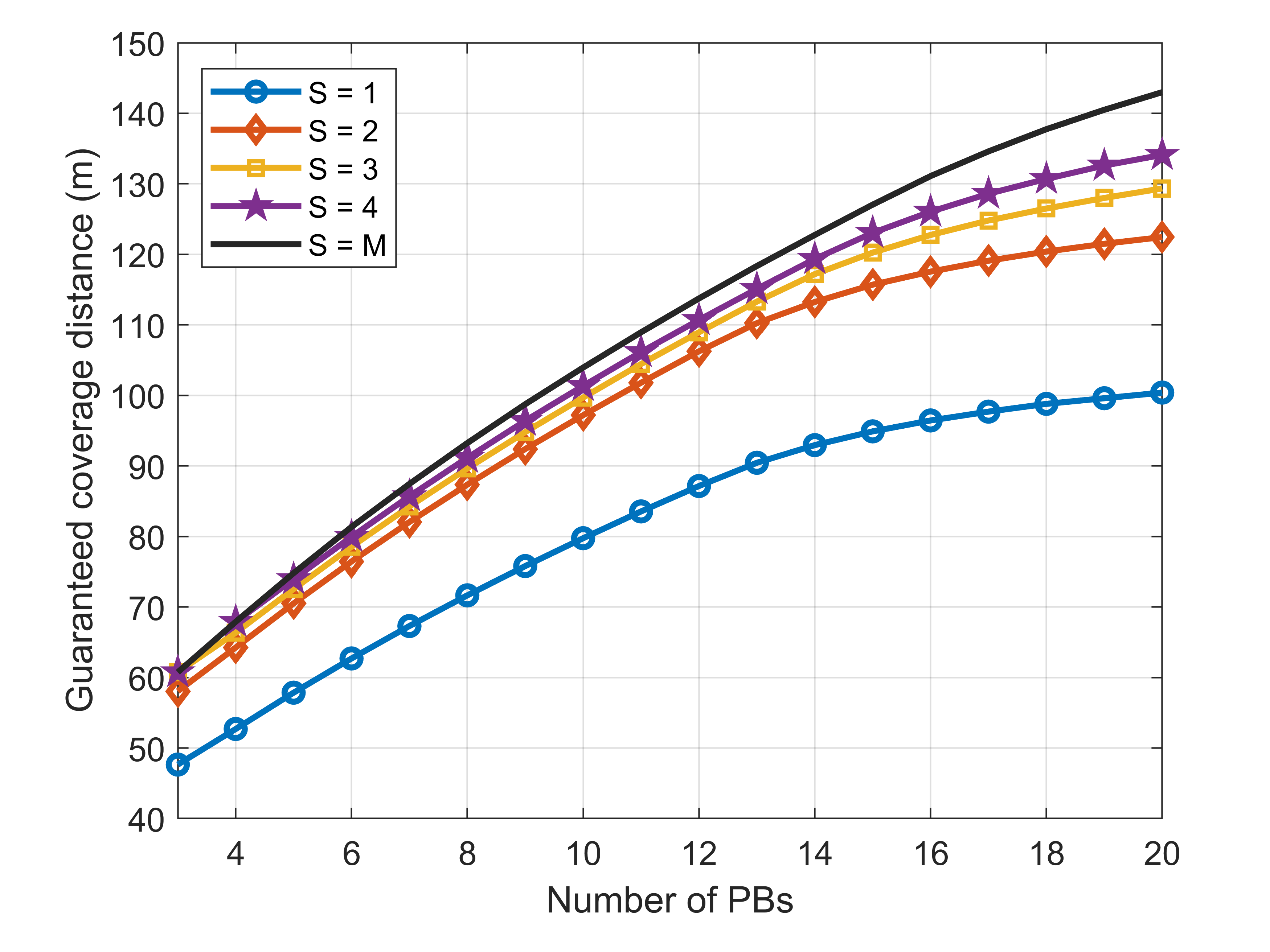}
		\caption{Guaranteed coverage distance}
	\end{subfigure}
	\caption{Effect of the number of serving PBs on the optimal PB distance and GCD.}
	\label{fig:optimalDistances}
\end{figure}

Fig. \ref{fig:optimalDistances} shows the optimal PB distance and the corresponding maximum GCD associated with the number of PBs. The benefit of increasing the number of serving PBs on the GCD diminishes as more PBs are required to serve each BD, as shown by the smaller increases in both $d^{*}$ and GCD when $S$ increases beyond $2$. For a network with a small number of PBs ($M \leq 12$), our results suggest to use no more than $2$ serving PBs for each BD.

\subsection{Effect of Circuit Power Constraint}

So far, our analysis and results have assumed semi-passive BDs with built-in batteries to support the circuit operation. In this subsection, we consider the case of passive BDs which require a minimum amount of energy from incoming signals in order to operate their circuits.

The addition of a circuit power constraint results in a change in the tag architecture. An energy harvesting module is required to harvest a portion of the energy in the incident signal to sustain the circuit operation, with the remaining part of the signal used for backscattering. Moreover, the efficiency of the harvester is often a nonlinear function of the incident signal energy, resulting in further losses and changes to the SNR expression in Section III. For the single serving PB case, the modified SNR expression becomes
\begin{multline}
\gamma = \Bigg\{ \zeta \Bigg( \frac{P \beta_{0} G_{T} \chi_{CT}}{\left(d^{2} + r^{2} - 2 d r \cos \left( \frac{\pi}{M} \right)\right)^{\frac{\delta}{2}}} X \Bigg) - \xi \Bigg\}^{+} \\ \times \left( \frac{\eta \beta_{0} G_{T} \chi_{TR} |b_{0} - b_{1}|^{2} L}{r^{\delta} N_{0}} \right) Y,	\label{eq:SNR_one_xi}
\end{multline}
where $G_{T}$ and $\chi_{CT}, \chi_{TR}$ are the antenna gains at the BD, and the polarization mismatch losses between PB-BD and BD-reader, respectively; $\zeta(\cdot)$ denotes the nonlinear energy harvesting efficiency function; and $\{ x \}^{+} = \max\{0, x\}$. The difference between (\ref{eq:SNR_one_xi}) and $\gamma$ in (\ref{eq:eqSNR_one}) is the inclusion of the circuit's minimum power consumption $\xi$, which is subtracted from the remaining power in the PB-BD link after harvesting. As such, the expression in the first bracket indicates the energy available for backscattering after the circuit's power consumption has been fulfilled, and is zero if the received energy is insufficient. The addition of the circuit power constraint significantly impacts the mathematical tractability of solving for $d^{*}$, where the distribution of the energy available for backscattering admits a complex form that depends on the harvesting efficiency function. We adopt an approximation of the harvesting efficiency function of the rectenna model presented in \cite{R1}, with the efficiency function given by
\begin{equation}
\zeta(p) = \begin{dcases}
		\frac{0.4911}{1 + \exp(-0.2(p + 21.65))}, & p \geq -42, \\
		0, & \text{otherwise},
		\end{dcases}
\end{equation}
where $p$ is the input power to the harvester in dBm, and $\zeta(p)$ also has units of dBm. For the subsequent result, we use a carrier frequency of $868$ MHz to match the rectenna model, although we note that these results are readily generalizable to adjacent carrier frequencies with appropriate energy harvester design.

\begin{figure}[h!]
	\begin{subfigure}{0.495\textwidth}
		\includegraphics[width=3.5in]{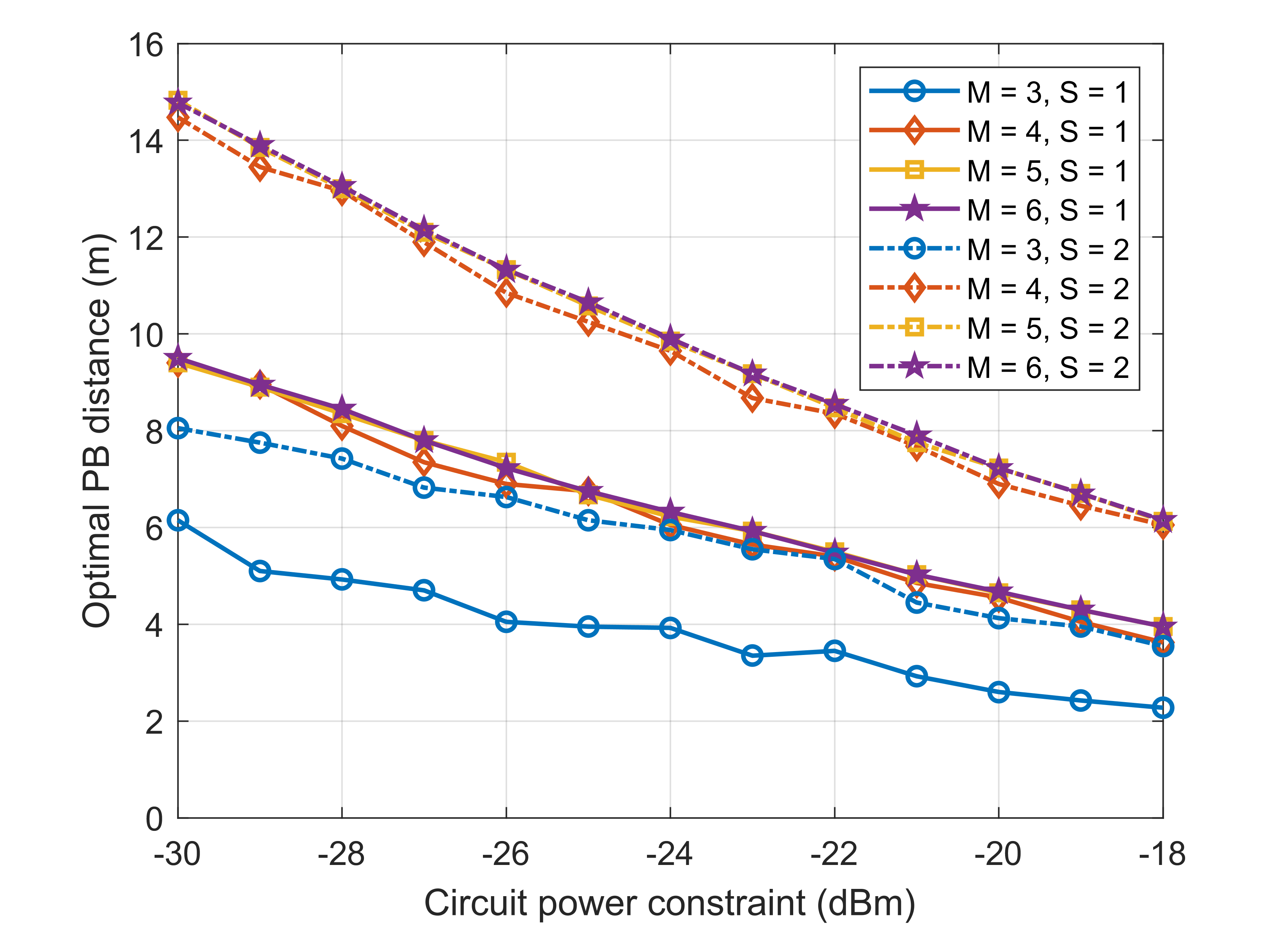}
		\caption{Optimal PB distance}
	\end{subfigure} \hfill
	\begin{subfigure}{0.495\textwidth}
		\includegraphics[width=3.5in]{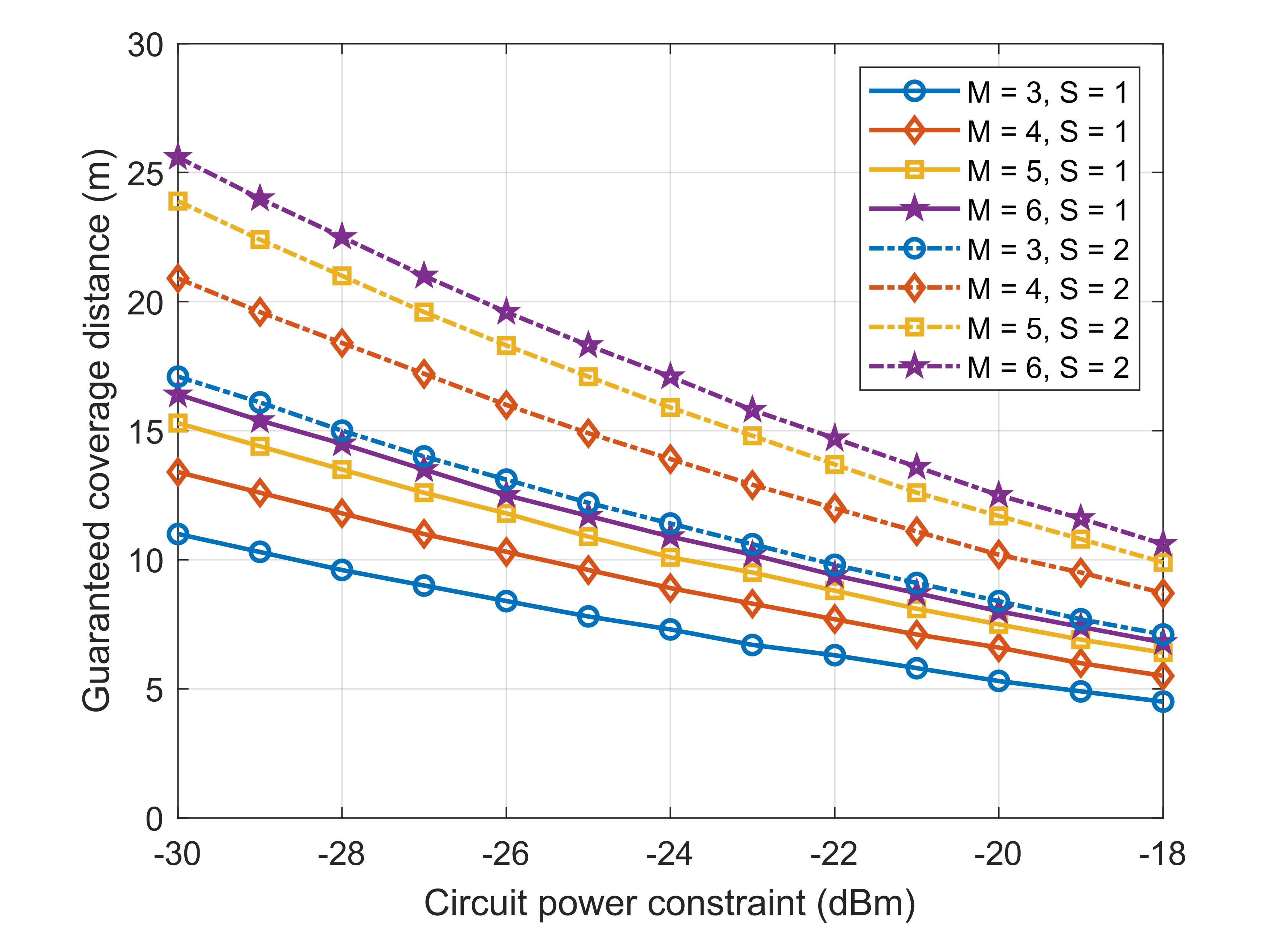}
		\caption{Guaranteed coverage distance}
	\end{subfigure}
	\caption{Effect of BD circuit power constraint the optimal PB distance and GCD.}
	\label{fig:cps}
\end{figure}

Fig. \ref{fig:cps} highlights the effects of the BD circuit power constraint on the optimal PB distance, with the results obtained numerically using Algorithm 1. For the purpose of illustration, the transmit power of PBs is set to $35$ dBm. Evidently, an increase in the circuit power constraint results in a smaller optimal PB distance and GCD, where a one order-of-magnitude increase from $-30$ dBm to $-20$ dBm results in between half and two-thirds reduction in the GCD depending on the serving PB configuration. In addition, the optimal PB distance does not increase further when $M > 5$, as shown in Fig. \ref{fig:cps}(a), where the curve for $M = 5$ completely overlaps with the curve for $M = 6$. Nevertheless, adding more PBs beyond $M > 5$ still results in a small increase in the GCD, as shown in Fig. \ref{fig:cps}(b), but the small increase may not justify the cost of deploying more PBs when $M$ is already large. On the other hand, having $S = 2$ presents a significantly improved GCD over the case of $S = 1$, similar to the observations in earlier numerical results.

\subsection{Comparisons with Alternative Placement Schemes}

\begin{figure}[h!]
	\centering
	\includegraphics[width=3.5in]{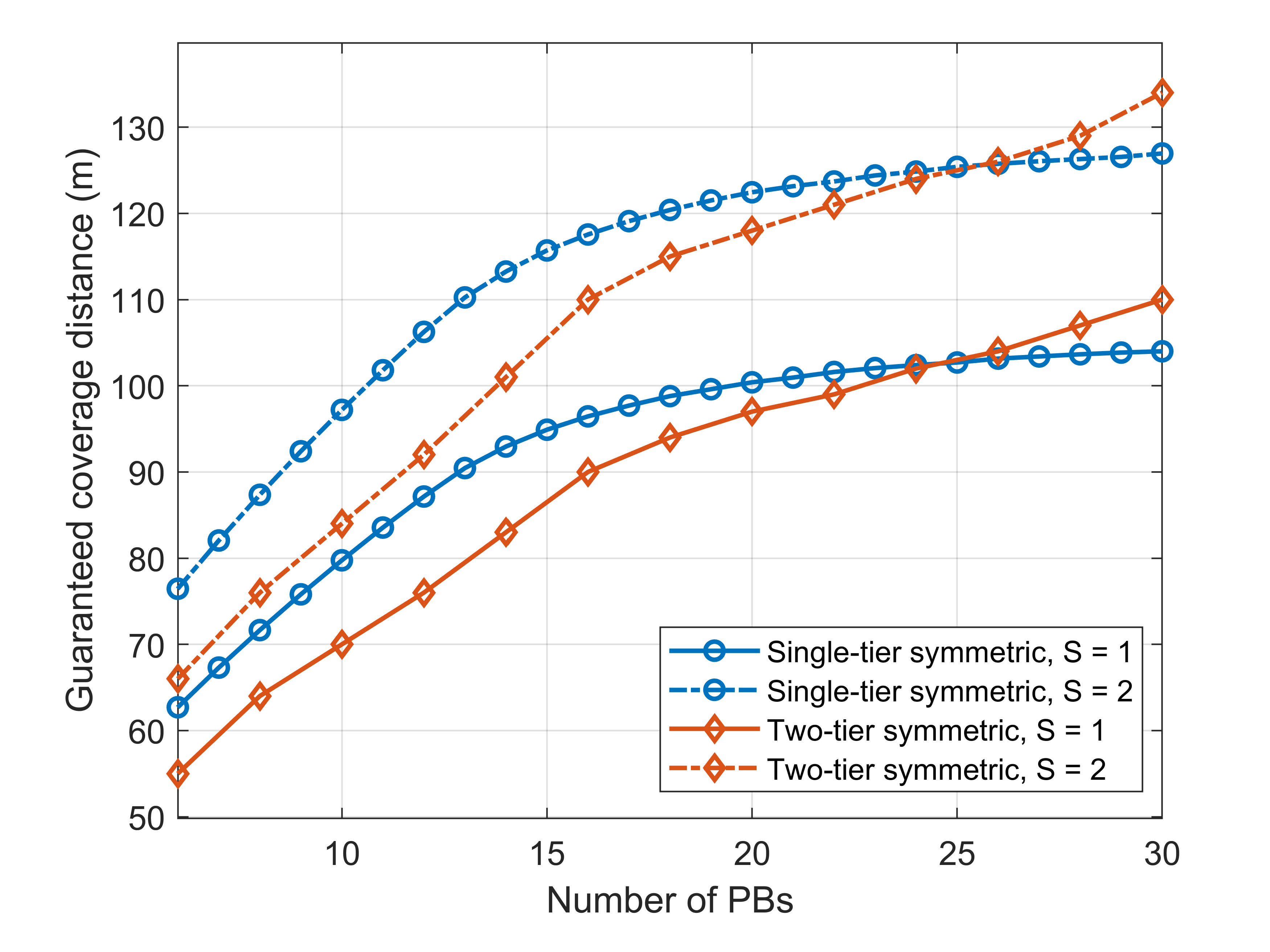}
	\caption{Comparison of the GCD between the single-tier symmetric and two-tier symmetric placement schemes for $S = 1$ and $S = 2$.}
	\label{fig:opt_comp}
\end{figure}

The optimality of the single-tier PB placement scheme adopted in this work may be intuitive to understand for a small number of PBs. Nevertheless, one may expect that a two-tier symmetric scheme should outperform the single-tier scheme when $M$ is large. Therefore, it is important to compare the performance between the single-tier and two-tier schemes. Fig. \ref{fig:opt_comp} compares the maximum achievable GCD for the single-tier symmetric scheme studied in this paper against a two-tier symmetric scheme. Under the two-tier scheme, each tier has at least $3$ PBs. Hence, the number of PBs in the inner and outer tiers are $M_{1} = \{3, \ldots, M-3\}$ and $M_{2} = M - M_{1}$, respectively, where $M \geq 6$. The distance of the inner tier is set to $d^{*}$ from the single-tier scheme (from Theorem 1) when $M_{1}$ PBs are deployed; while the distance of the outer tier is numerically optimized over a range within the vicinity of this value. At each outer tier distance, the rotation offset of the second-tier PBs relative to the positive $x$-axis is varied between $[0, \frac{\pi}{M_{2}}]$. For each value of $M$, the largest GCD achieved by any combination of \{$M_{1}$, $M_{2}$, outer tier distance, rotation offset\} is numerically obtained and shown in Fig. \ref{fig:opt_comp}. One can observe that for both $S = 1$ and $S = 2$, single-tier outperforms two-tier up to at least $M = 24$, which is already a relatively large number of PBs. Beyond this threshold, the two-tier scheme has a performance advantage in cases where the optimal placement allocates more PBs to the outer tier than the inner tier. However, the computation of an optimal two-tier placement incurs much higher complexity than Algorithm 1 due to the additional variables involved. Moreover, it is usually not wise in practice to deploy a very large number of PBs in multiple tiers surrounding a single reader in order to grow the coverage area, as the gains in coverage area are limited with more added PBs. When wide-area coverage is concerned, a more efficient network deployment strategy (discussed in Section V-D) is the use of multiple readers, each of which is surrounded by PBs placed according to the single-tier scheme. Thus, the single-tier scheme balances good performance with reasonable deployment costs.

\begin{figure}[h!]
	\centering
	\includegraphics[width=3.5in]{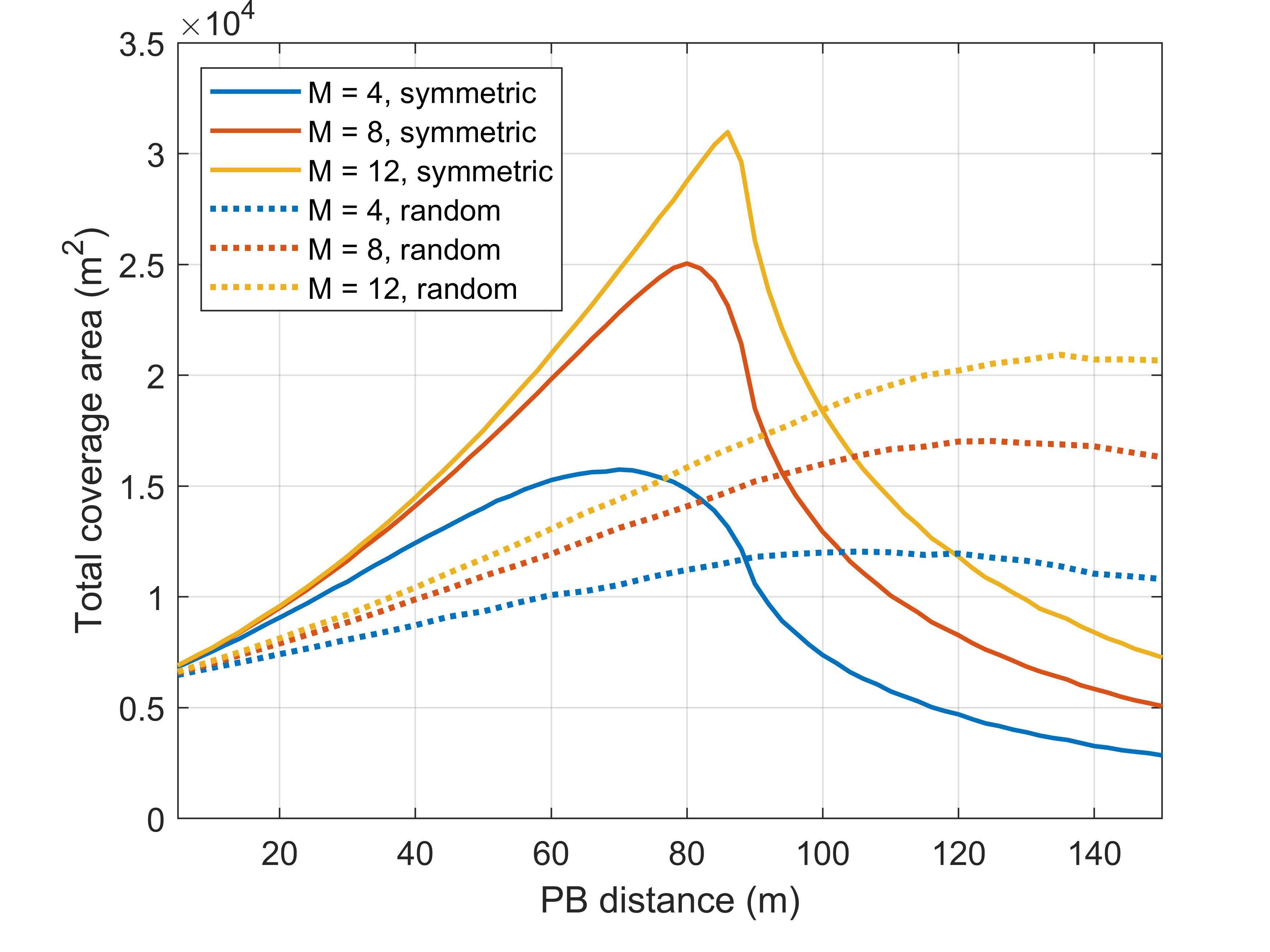}
	\caption{Comparison of the total coverage area between the symmetric and random PB placement schemes, where each PB is served by its nearest PB (i.e., $S = 1$).}
	\label{fig:cov_rand}
\end{figure}

We also examine the performance comparison with a random placement scheme, which is typically considered in the literature as a useful benchmark. Note that the GCD is no longer a suitable metric for random placement, which may result in scattered `uncovered' regions closer to the reader while regions further away from the reader are covered. For fair comparison, we use the total coverage area as the metric. Note that for the symmetric placement scheme, the total coverage area includes the additional `covered' regions beyond the GCD. Fig. \ref{fig:cov_rand} presents the numerically computed total coverage areas for both the symmetric and random placement schemes. The PB distance $d$ for the symmetric placement scheme is the same as defined previously in this paper. The PB distance $d$ for the random placement scheme denotes the maximum distance for PB placement, where $M$ PBs are randomly placed according to a uniform distribution in a disk region of radius $d$ centered at the reader. Hence, the $x$-axis in Fig. \ref{fig:cov_rand} refers to different quantities for the two schemes. It is clear that when all areas satisfying the SNR threshold are accounted for, the symmetric placement scheme outperforms the random scheme in terms of the maximum achievable total coverage area (i.e., the total coverage area achieved at the optimal distance parameter value), by $25$-$35\%$ for the values of $M$ considered.

\subsection{Extension to Multiple Readers and General Design Guidelines}

The single-reader nature of the proposed bistatic backscatter system and the associated PB distance optimization ensures coverage for a reasonably-sized area on the order of tens of thousands of $m^{2}$. These results are readily extendable when coverage over larger areas is required. To achieve this, multiple instances of the single-tier symmetric PB setup may be deployed in a cellular fashion, wherein each cell contains its own reader and PBs deployed at the optimal $d^{*}$ relative to its reader. As an example, Fig.~\ref{fig:multicell} shows that the addition of multiple cells surrounding a single initial cell results in a considerably larger overall coverage area.

\begin{figure}[h!]
	\centering
	\includegraphics[width=3.5in]{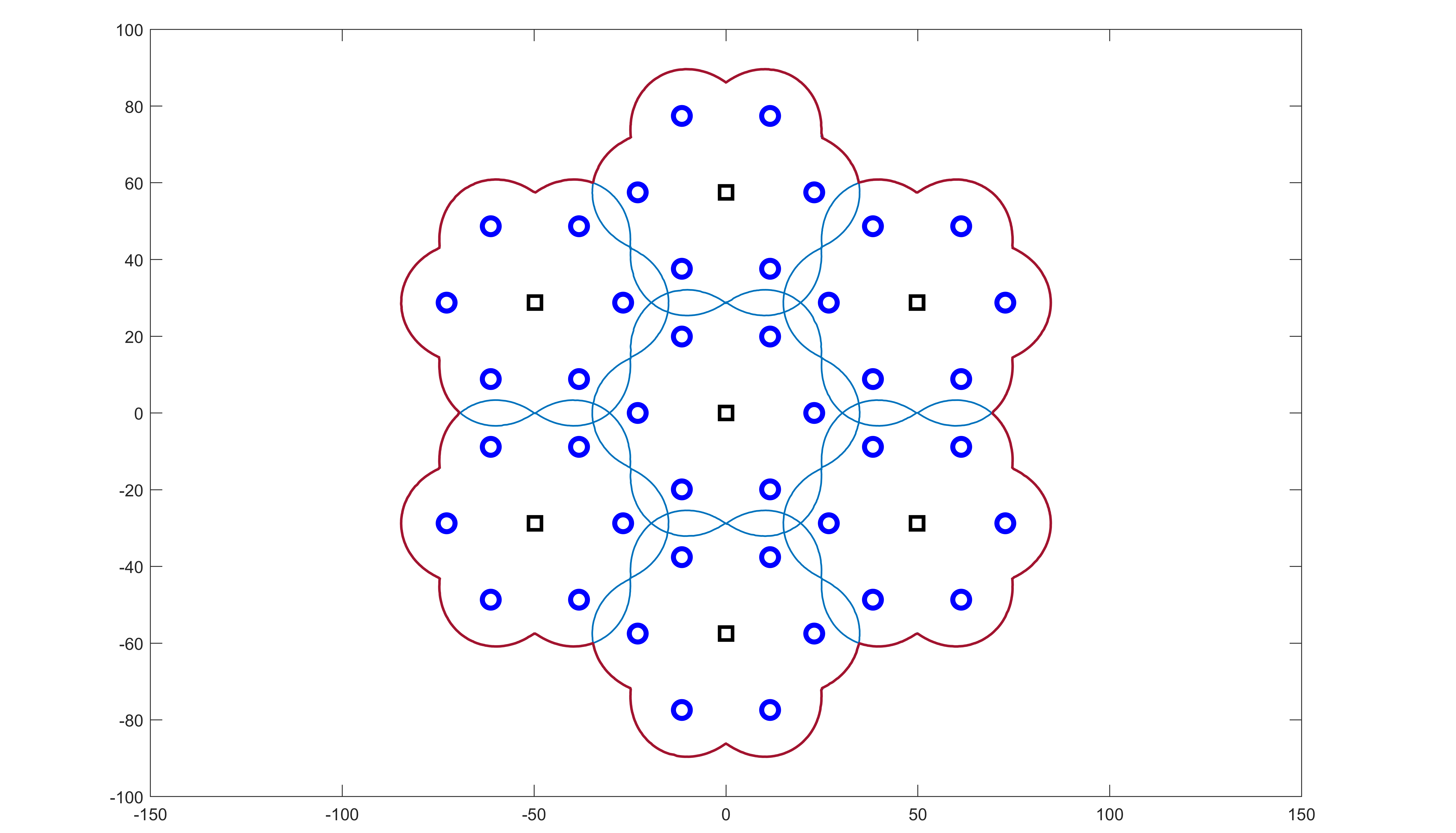}
	\caption{Illustration of a multi-cell bistatic backscatter network, with each cell consisting of a reader with PBs deployed in the single-tier symmetric scheme. The coverage area of each cell is shown in blue and the overall coverage area is shown in red.}
	\label{fig:multicell}
\end{figure}

As the single-tier PB analysis accounts for channel variations to provide a deterministic GCD for each cell, the parameters $d^{*}$ and $M$ may be chosen for any one cell based on the analysis therein, and applied to all other cells. Further optimization may be performed by taking into account that in certain cases, a PB placed near the boundary of two cells can be utilized to serve both cells, further reducing the overall number of PBs required in the whole network. Such additional optimizations deserve further study, but are beyond the scope of this work.

Thus, we find that our design insights in Sections III-IV provide the general guideline to deploy the PBs in a ring with equal separation to ensure maximization of guaranteed coverage for a network with a single reader. Where extended coverage is required, the same design parameters (i.e., $d^{*}$ and $M$) can also be obtained using the same procedure for any number of serving PBs, with the optimized parameters applied to each cell in a multi-cell network. With appropriate inter-cell interference mitigation strategies, different cells in the system are able to work jointly to provide a larger overall coverage region.


\section{Conclusion}
In this paper, a single-tier symmetric PB placement strategy for a bistatic backscatter network was studied, with the objective of maximizing the guaranteed coverage distance. Expressions for the outage probability were derived for the cases where a BD is served by one or multiple nearest PBs. In addition, closed-form expressions for the optimal PB placement distance were shown to be attainable for the cases of serving with one and two nearest PBs. Notably, when the number of PBs in the network is small, the use of two nearest PBs as serving PBs is a favorable choice. Moreover, the advantages of the single-tier symmetric placement scheme over other schemes, such as the two-tier symmetric and random schemes, were also demonstrated. This paper presents an introductory study into the efficient infrastructure deployment in bistatic backscatter networks. Future work can extend towards optimal placement of both PBs and readers to cover larger geographical areas.


\appendices
\section{Proof of Theorem 1}
We wish to solve the equation $\gamma_{eq} = \gamma_{eq,th}$, and maximize the resulting expression to obtain $d^{*}$. It is already established in Proposition 1 that for a BD at distance $r$ from the reader, the location where it achieves the lowest SNR must lie on the the sector edge; that is, $\theta_{o} = \frac{\pi}{M}$. The equation to be solved becomes a quartic polynomial by rearranging $\gamma_{eq}$ in (\ref{eq:eqSNR_one}):
\begin{equation}
f(r) = r^{4} - 2 d r ^{3}\cos\left(\frac{\pi}{M}\right) + d^{2} r^{2} - \varsigma, \label{eq:f(r)}
\end{equation}
where we have defined $\varsigma = \left( \frac{\alpha}{\gamma_{eq,th}} \right)^{\frac{2}{\delta}}$. Equation (\ref{eq:f(r)}) is analogous to a path loss function, and has the same roots as the equation $\gamma = \gamma_{th}$. Hereafter, we denote $\cos\left(\frac{\pi}{M}\right)$ as $\Theta$ for compactness. We first present the following result, which aids the remainder of the proof.

\begin{lemma}
The equivalent SNR of a BD on the sector edge is monotonically decreasing as $r$ increases, for $M \leq 9$. The monotonicity condition does not hold for $M \geq 10$.
\end{lemma}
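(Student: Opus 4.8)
The plan is to reduce the claim to the sign of a single explicit quadratic in $r$. On the sector edge $\theta_{o}=\frac{\pi}{M}$, equation (\ref{eq:eqSNR_one1}) reads $\gamma_{eq}=\alpha\,h(r)^{-\delta/2}$, where $h(r):=r^{2}\big(d^{2}+r^{2}-2dr\Theta\big)=r^{4}-2d\Theta r^{3}+d^{2}r^{2}$ and $\Theta=\cos\!\big(\frac{\pi}{M}\big)$. Since $h(r)=r^{2}\big((r-d\Theta)^{2}+d^{2}(1-\Theta^{2})\big)>0$ for every $r>0$ (using $\Theta<1$), and since $t\mapsto t^{-\delta/2}$ is strictly decreasing for $\delta\ge 2$, the function $\gamma_{eq}$ is strictly decreasing in $r$ on a subinterval of $(0,\infty)$ if and only if $h$ is strictly increasing there. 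So the entire statement becomes a statement about the sign of $h'$.

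Next I would differentiate: $h'(r)=4r^{3}-6d\Theta r^{2}+2d^{2}r=2r\,q(r)$ with $q(r):=2r^{2}-3d\Theta r+d^{2}$, so for $r>0$ the sign of $h'(r)$ equals the sign of $q(r)$. The discriminant of $q$ is $9d^{2}\Theta^{2}-8d^{2}=d^{2}(9\Theta^{2}-8)$. If $9\Theta^{2}\le 8$, i.e.\ $\Theta\le\frac{2\sqrt{2}}{3}$, then $q\ge 0$ on all of $\mathbb{R}$, hence $h$ is nondecreasing and $\gamma_{eq}$ is monotonically decreasing on $(0,\infty)$. If instead $9\Theta^{2}>8$, then $q$ has two distinct real roots whose sum $\frac{3d\Theta}{2}$ and product $\frac{d^{2}}{2}$ are both positive, so $q<0$ on a nonempty open interval contained in $(0,\infty)$; there $h$ strictly decreases and hence $\gamma_{eq}$ strictly increases, so monotone decreasingness fails.

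Finally I would convert the threshold $\cos\!\big(\frac{\pi}{M}\big)\le\frac{2\sqrt{2}}{3}$ into a bound on $M$. Since $\cos\!\big(\frac{\pi}{M}\big)$ increases with $M$, and since $\cos\frac{\pi}{9}\approx 0.9397<\frac{2\sqrt{2}}{3}\approx 0.9428<0.9511\approx\cos\frac{\pi}{10}$ (the crossover being $M=\pi/\arccos(\tfrac{2\sqrt{2}}{3})\approx 9.24$), the inequality holds exactly for $M\le 9$ and is violated for $M\ge 10$, which is the assertion. (The degenerate case $d=0$ gives $h(r)=r^{4}$, trivially increasing, consistent with the claim.) I do not anticipate a genuine obstacle here; the only point needing slight care is checking, in the $M\ge 10$ case, that $\{r>0:q(r)<0\}$ is a nondegenerate interval, which is immediate from the strict positivity of the discriminant and of the two (positive) roots, so $\gamma_{eq}$ indeed has a strictly increasing stretch and monotonicity genuinely breaks down.
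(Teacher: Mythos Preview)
Your proof is correct and follows essentially the same route as the paper: differentiate the path-loss quantity $h(r)=r^{4}-2d\Theta r^{3}+d^{2}r^{2}$, factor out $r$, and analyze the discriminant $d^{2}(9\Theta^{2}-8)$ of the remaining quadratic to obtain the threshold $\Theta=\tfrac{2\sqrt{2}}{3}$, which lands between $\cos\frac{\pi}{9}$ and $\cos\frac{\pi}{10}$. Your write-up is in fact a little tighter than the paper's: you explicitly justify $h(r)>0$ so that the composition with $t\mapsto t^{-\delta/2}$ is legitimate, and you use Vieta to confirm that both roots of $q$ are positive, which is needed to conclude that the non-monotone stretch actually lies in $(0,\infty)$.
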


\begin{proof}
We consider the domain $r \in [0, \infty)$. The roots of the first derivative of (\ref{eq:f(r)}) with respect to $r$ tell us the segments on the sector edge where the SNR is increasing or decreasing:
\begin{equation}
\frac{\mathrm{d}\varsigma}{\mathrm{d}r} = 4 r^{3} - 6 d r^{2} \Theta + 2 d^{2} r. \label{eq:f'(r)}
\end{equation}
Equation (\ref{eq:f'(r)}) tends to $\infty$ as $r$ increases, and can be simplified to a (convex) quadratic $4 r^{2} - 6 d r \Theta + 2 d^{2}$ by observing that there is a zero root. The discriminant of this quadratic, $9 d^2 \Theta^{2} - 8 d^2$, is less than zero for all $d$ when $\Theta < \frac{7}{9}$; taking the floor of the resulting inequality solution gives $M \leq 9$. When this is true, (\ref{eq:f'(r)}) has two complex roots in addition to the zero root, and is hence positive for all $r > 0$. It follows that the discriminant must be positive for some $d$, when $M \geq 10$. As the path loss function in (\ref{eq:f(r)}) is an inverted form of the SNR, the SNR is non-increasing for $M \leq 9$, and non-monotonic otherwise.
\end{proof}

Deferring the case where both non-zero roots of (\ref{eq:f'(r)}) are real and equal to the sequel, we can observe from Lemma 1 that there are two cases regarding the two roots in question:
\begin{itemize}
\item If both real, then along the sector edge, $\gamma_{eq} > \gamma_{eq, th}$ for small $r$, then $\gamma_{eq} < \gamma_{eq, th}$ for some time where $r \approx d$, and then $\gamma_{eq} > \gamma_{eq, th}$ as $r \rightarrow \infty$. In this case, (\ref{eq:f(r)}) has three positive real roots, where $f(r) > 0$ between the first and second roots (corresponding to a coverage gap); and $r_{cov}$ is equal to the smallest positive real root.
\item If both complex, then the SNR is monotonically decreasing for all $r$; and $r_{cov}$ is the single positive real root of (\ref{eq:f(r)}).
\end{itemize}
Let $r_{max}$ be the largest positive real root to (\ref{eq:f(r)}). For $M \leq 9$, where only one positive real root exists for $f(r)$, $r_{cov}$ is equal to $r_{max}$. For $M \geq 10$, depending on $M$ and $\gamma_{eq,th}$, $r_{cov}$ may correspond to a smaller root, since continuous coverage on the sector edge is not guaranteed.

\begin{figure}[h!]
	\centering
	\includegraphics[width=3.5in]{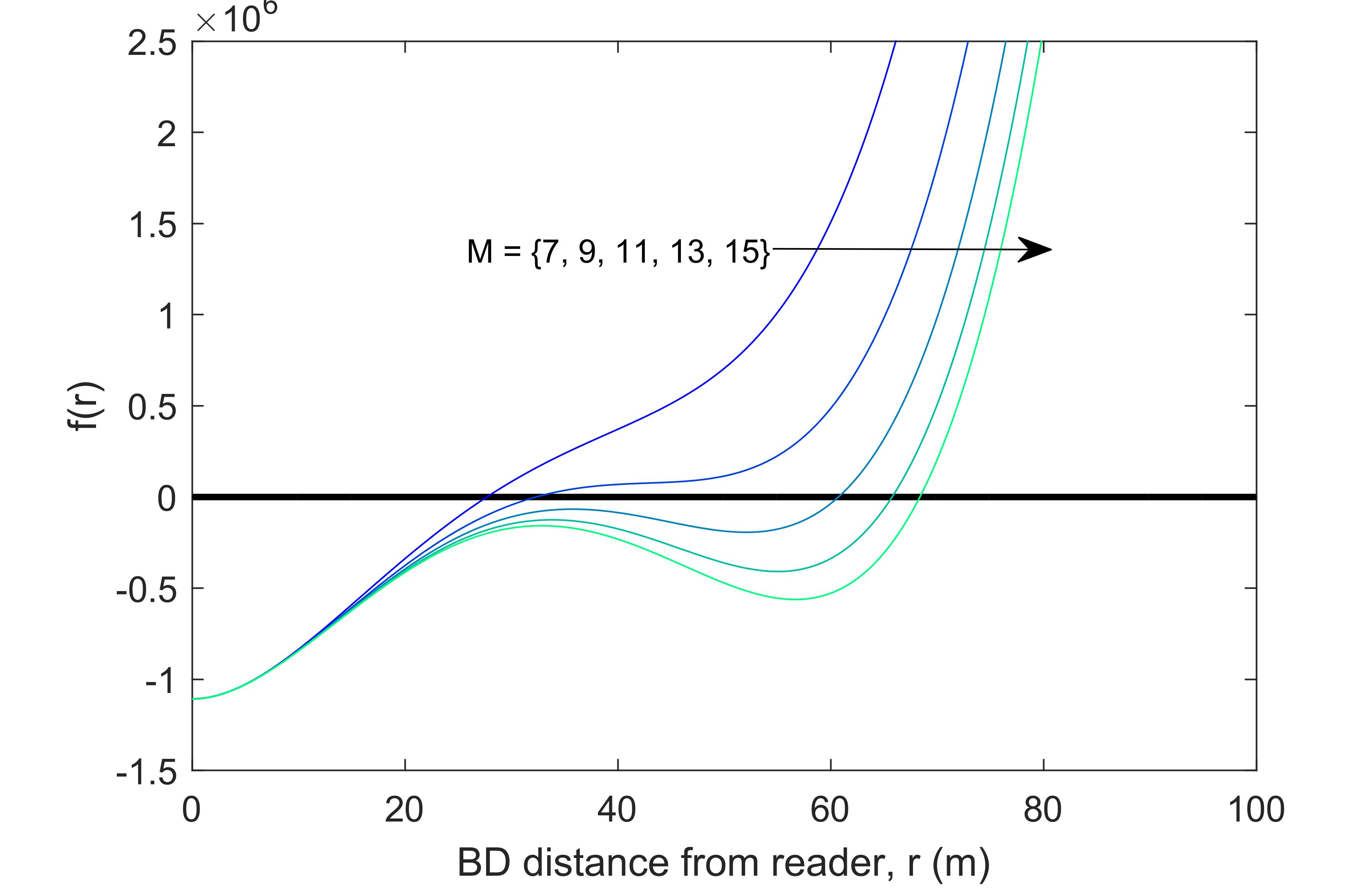}
	\caption{Behavior of path loss function in (\ref{eq:f(r)}) as $M$ increases.}
	\label{fig:app1}
\end{figure}

The behavior of $f(r)$ is shown in Fig. \ref{fig:app1} for representative values of $d$, $\varsigma$ and $M$. Note that the remaining root is located in the negative $x$-axis. The occurrence of three positive roots depends on the value of $\gamma_{th}$, which translates to a vertical shift in the path loss function in (\ref{eq:f(r)}). It is evident that (\ref{eq:f(r)}) is non-decreasing for $M \leq 9$, and that the coverage gap behavior, occurring when three positive real roots exist, becomes apparent for $M \geq 11$ and certain values of $\varsigma$. 

\begin{figure}[h!]
	\centering
	\begin{subfigure}{0.48\textwidth}
		\centering
		\includegraphics[width=3.5in]{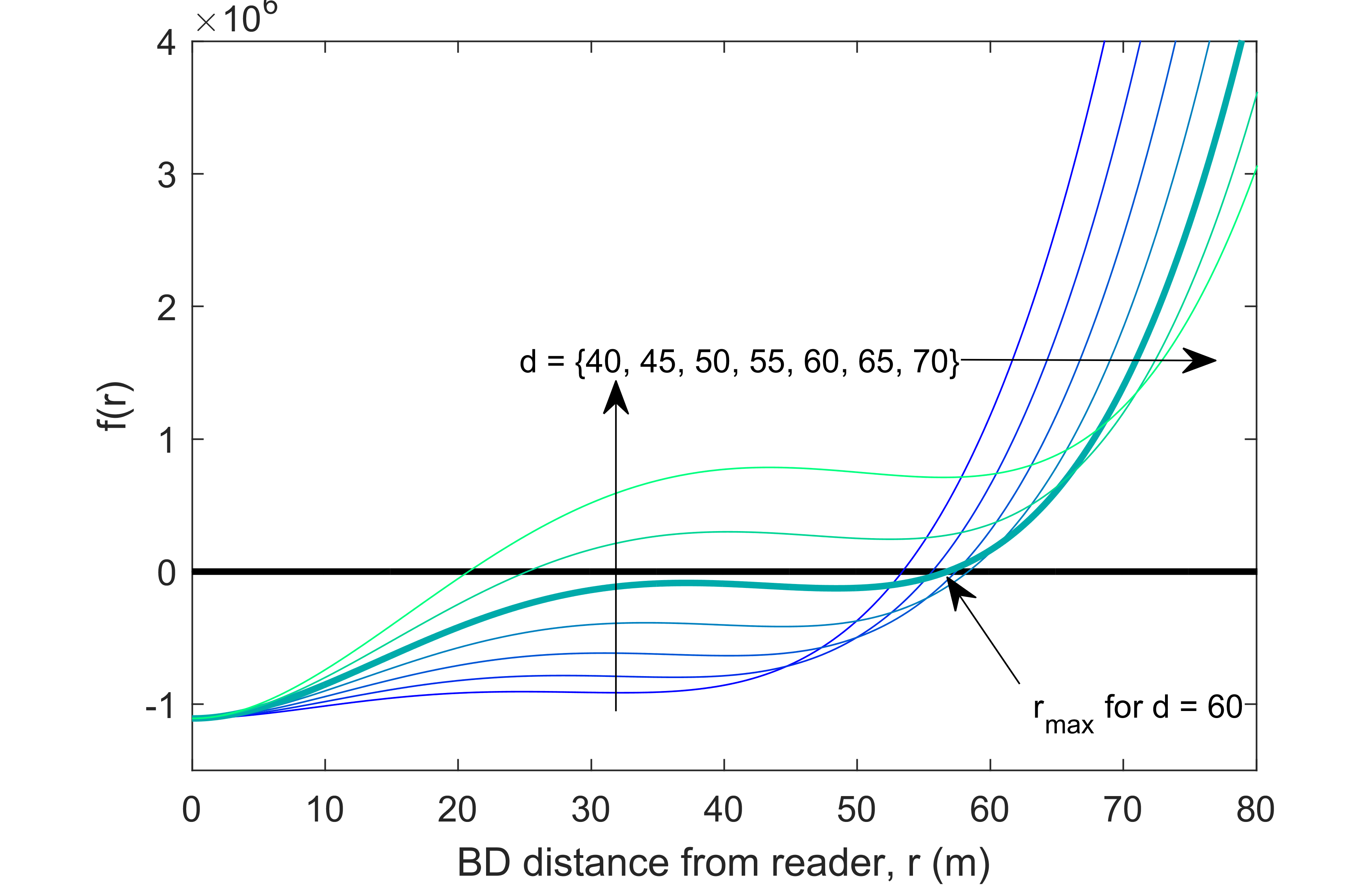}
		\caption{$M = 10$}
	\end{subfigure} \hfill
	\begin{subfigure}{0.48\textwidth}
		\centering
		\includegraphics[width=3.5in]{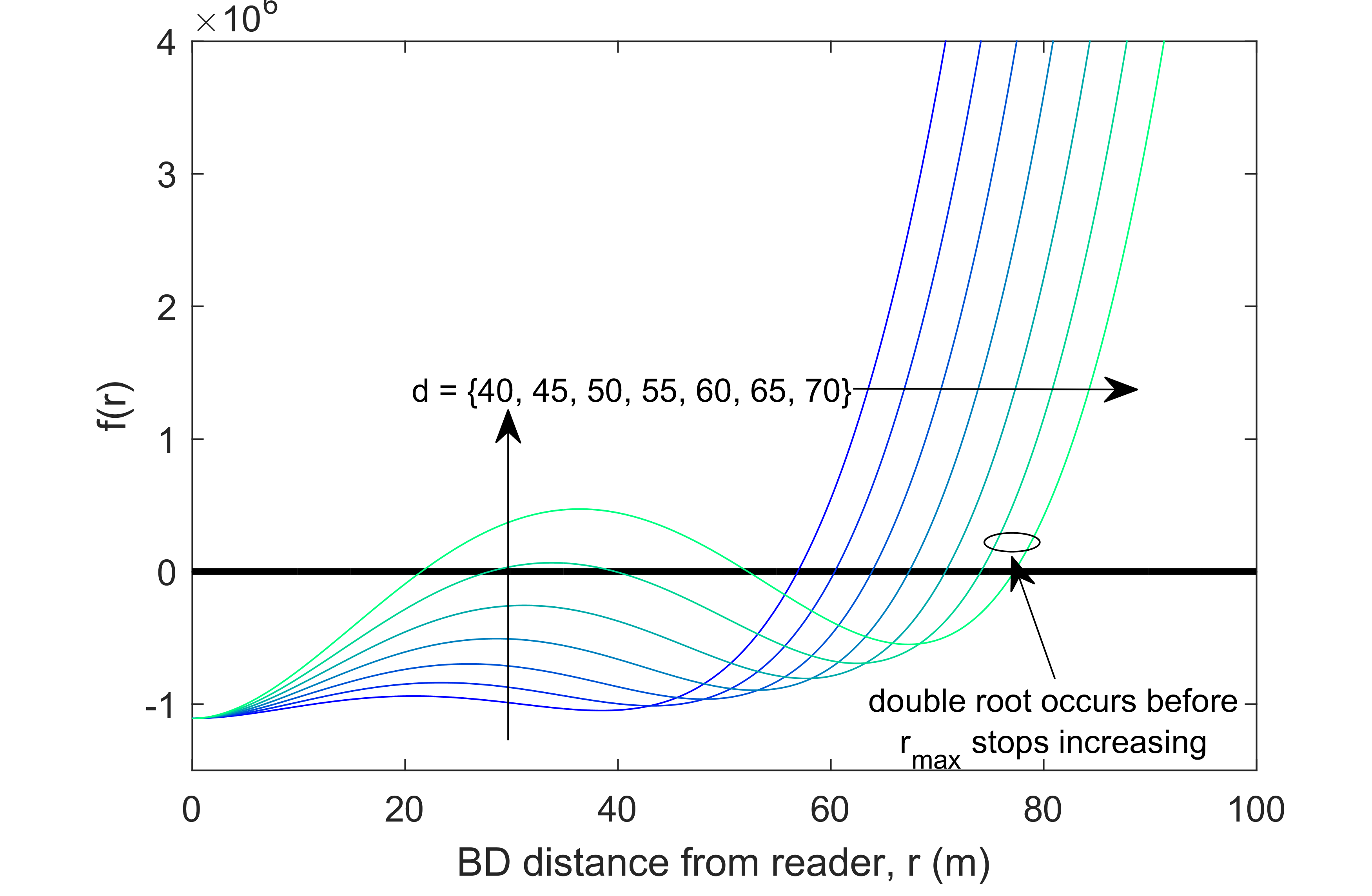}
		\caption{$M = 20$}
	\end{subfigure}
	\caption{Behavior of path loss function in (\ref{eq:f(r)}) when varying $d$.}
	\label{fig:app2}
\end{figure}

We also illustrate a situation where $d$ is varied holding $M$ constant in Fig. \ref{fig:app2}, for $d$ between $40$ and $70$ in steps of $5$, and a small and large value of $M$. It is clear that for the small-$M$ case, $r_{max}$ begins to decrease between $d = 55$ and $60$, despite the two other real roots not having occurred. The curve for $d = 60$ is shown in bold to illustrate the reduction in $r_{max}$. In the large-$M$ case, $r_{max}$ continually increases as the other roots appear.

These observations are sufficient to imply the existence of two regimes depending on $M$, and that there are different expressions for $d^{*}$ under each regime. The boundary can be thought of as a differentiator between two cases as $d$ is increased:
\begin{itemize}
\item \textbf{Case (a)}: Either $r_{max}$ stops increasing before the occurrence of a real double root in $f(r)$ (which includes the case where $M \leq 9$, where the double roots are complex), in which case the largest feasible $r$, i.e., $r_{max}$, determines $d^{*}$; or
\item \textbf{Case (b)}: $r_{max}$ stops increasing after the occurrence of a real double root in $f(r)$, in which case $d^{*}$ is equal to the smallest value of $d$ where the double root first occurs.
\end{itemize}
These are the only two possibilities; therefore, such a boundary exists.

Ideally, to obtain $d^{*}$, one first determines the smallest positive real root of (\ref{eq:f(r)}), and then finds $d$ which maximizes the root expression. However, due to the unwieldy form of the exact roots of quartic polynomials, we take an alternative approach. In \textbf{Case (a)}, $r_{max}$ is simply the largest feasible value of $r$, regardless of whether a coverage gap exists or not. Therefore, we determine $d^{*}$ by considering (\ref{eq:f(r)}) as a function of $d$, which is a convex quadratic function:
\begin{equation}
f(d) = r^{2} d^{2} - 2 r^{3} \Theta d + (r^{4} - \varsigma). 	\label{eq:f(d)}
\end{equation}
For a given $r$, the existence of real roots to $f(d)$ effectively indicates that the value of $r$ is feasible. In other words, if a BD lies on the sector edge at distance $r$ from the reader, then there exists a range of $d$ within which the equivalent SNR constraint can be satisfied for that BD; that range of $d$ is between the two roots of (\ref{eq:f(d)}). Using this observation, we first find $r_{max}$ by using the fact that $f(d)$ has a double root when $r = r_{max}$; that is, the double root is the point where only one value of $d$ makes $r$ just feasible. By completing the square on (\ref{eq:f(d)}), we require
\begin{equation}
r_{max}^4 - \varsigma - \frac{4 r_{max}^{6} \Theta^{2}}{4 r_{max}^{2}} = 0 \hspace{2mm} \Rightarrow \hspace{2mm} r_{max} = \left( \frac{\varsigma}{1 - \Theta^{2}} \right)^{\frac{1}{4}}.		\label{appC_1}
\end{equation}
The corresponding solution for $d$ can then be found by computing the vertex of the quadratic in (\ref{eq:f(d)}), which equates to $\frac{2 r_{max}^{3} \Theta}{2 r_{max}^{2}} = r_{max} \Theta$. 

From the previous discussion on \textbf{Case (a)}, we note that $r_{max}$ in (\ref{appC_1}) is valid up to some threshold for $M$, denoted by $M_{th}$, where $M_{th} > 9$. In other words, $M_{th}$ is the point where $r_{cov}$ switches from the largest positive real root to the smallest positive real root of (\ref{eq:f(r)}). This gives the first part to Theorem 1, and is summarized in Lemma 2 below.

\begin{lemma}
For $M \leq M_{th}$, the optimal PB deployment distance $d^{*}$ is given by
\begin{equation}
d^{*} = \left( \frac{\varsigma}{1 - \Theta^{2}} \right)^{\frac{1}{4}} \Theta.  \label{eq:thm1a}
\end{equation}
\end{lemma}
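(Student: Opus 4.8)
The plan is to make precise the claim from the Case (a) discussion: for $M \le M_{th}$, the guaranteed coverage distance $r_{cov}$ equals the largest positive real root $r_{max}$ of $f(r)$ in (\ref{eq:f(r)}), and $r_{max}$ is maximized over $d$ at the value given in (\ref{eq:thm1a}). First I would fix $M \le 9$ (so that Lemma 1 applies and $f(r)$ has exactly one positive real root, hence $r_{cov}=r_{max}$ unambiguously), and then argue the extension up to $M_{th}$ afterwards. The central device is the ``swap of variables'' already set up in the excerpt: instead of writing the smallest/largest root of the quartic $f(r)$ in closed form, I would view (\ref{eq:f(r)}) as the quadratic $f(d)$ in (\ref{eq:f(d)}), namely $f(d) = r^2 d^2 - 2r^3\Theta d + (r^4 - \varsigma)$. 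For a fixed $r$, the set of feasible PB distances (those satisfying the equivalent-SNR constraint with equality or slack, i.e.\ $f(d)\le 0$) is exactly the interval between the two roots of $f(d)$, which is nonempty iff the discriminant $4r^6\Theta^2 - 4r^2(r^4-\varsigma) \ge 0$, i.e.\ iff $r^4(1-\Theta^2) \le \varsigma$, i.e.\ iff $r \le \big(\varsigma/(1-\Theta^2)\big)^{1/4}$. This immediately identifies the largest feasible $r$ as $r_{max} = \big(\varsigma/(1-\Theta^2)\big)^{1/4}$, achieved precisely when the discriminant vanishes, i.e.\ at the single $d$ equal to the vertex $d = \tfrac{2r_{max}^3\Theta}{2r_{max}^2} = r_{max}\Theta$, which is (\ref{eq:thm1a}).

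Next I would close the logical gap between ``largest feasible $r$'' and ``$d^*$''. The point is that $r_{cov}(d)$, as a function of $d$, can never exceed $r_{max}$ (any $r>r_{max}$ admits no feasible $d$ at all, by the discriminant computation), and $r_{cov}(d)$ attains $r_{max}$ at $d = r_{max}\Theta$; hence that value of $d$ is optimal for Problem \textbf{P2}. To be careful, I would verify that at $d = r_{max}\Theta$ the value $r=r_{max}$ is indeed the relevant root of $f(r)$ (not masked by a coverage gap): since $M\le 9$ gives monotone-decreasing $\gamma_{eq}$ along the sector edge by Lemma 1, $f(r)$ has a unique positive real root and it equals $r_{max}$, so $r_{cov}(r_{max}\Theta) = r_{max}$ with no gap. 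For $9 < M \le M_{th}$, I would invoke the Case (a) characterization already argued in the excerpt: $M_{th}$ is defined as the threshold below which $r_{max}$ stops increasing in $d$ before any real double root of $f(r)$ appears, so on this range the largest root is still the operative one and the same discriminant argument on $f(d)$ goes through verbatim, yielding the same $d^* = r_{max}\Theta$.

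Finally I would substitute $r_{max} = \big(\varsigma/(1-\Theta^2)\big)^{1/4}$ into $d^* = r_{max}\Theta$ to recover (\ref{eq:thm1a}) exactly, and note the sanity checks that the paper later uses (e.g.\ $\Theta = \cos(\pi/2)=0$ gives $d^*=0$ for $M=2$, consistent with the text). The main obstacle I anticipate is not the algebra — the quadratic-in-$d$ trick makes it routine — but rather the \emph{justification} that over the whole range $M\le M_{th}$ the guaranteed coverage distance is governed by $r_{max}$ rather than a smaller root; that is precisely the content of distinguishing Case (a) from Case (b), and it hinges on the monotonicity analysis of Lemma 1 plus the behavior of $r_{max}$ as $d$ varies (illustrated in Fig.\ \ref{fig:app2}). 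I would therefore state clearly that Lemma 2 is conditioned on $M \le M_{th}$ with $M_{th}$ defined by the crossover in Theorem 1, deferring the determination of the exact value $M_{th}\approx 12.36$ and the Case (b) formula to the remainder of the proof of Theorem 1.
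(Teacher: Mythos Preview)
Your proposal is correct and follows essentially the same argument as the paper: you recast the quartic $f(r)$ as the quadratic $f(d)$ in (\ref{eq:f(d)}), use the discriminant (the paper ``completes the square'', which is the same thing) to identify $r_{max} = (\varsigma/(1-\Theta^2))^{1/4}$, and read off $d^* = r_{max}\Theta$ from the vertex. If anything, you are slightly more explicit than the paper in closing the gap between ``largest feasible $r$'' and ``optimal $d$'', and in separating the $M\le 9$ case (where Lemma 1 directly applies) from the $9 < M \le M_{th}$ case handled by the Case (a) reasoning.
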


Note that as $M \rightarrow \infty$, $d^{*}$ in (\ref{eq:thm1a}) approaches infinity, which is not possible even with large $M$ when a single serving PB is assumed. This confirms the existence of a boundary between the small-$M$ regime and the large-$M$ regime. Now we are required to determine $M_{th}$. In light of this, we examine \textbf{Case (b)}, where we revisit the case where (\ref{eq:f'(r)}) takes on a double root, which only becomes possible when $M \geq 10$. In other words, there exists a point where as $d$ increases, (\ref{eq:f(r)}) switches from having one positive real root to first a single root plus a double root, and then three positive real roots. Taking advantage of this behavior, we provide the following closed-form result for the value of $d$ at which the double root occurs.

\begin{lemma}
For $M \geq 10$, the distance $d^{'}$ at which (\ref{eq:f(r)}) admits a real double root is given by
\begin{multline}
d^{'} = \bigg( -\frac{1}{2} \csc^{2}\left( \frac{\pi}{M} \right) \bigg( \sqrt{\varsigma^{2} \Theta^{2} \left( 9 \Theta^{2} - 8 \right)^{3}} \\ + \varsigma \left( 27 \Theta^{4} - 36 \Theta^{2} + 8 \right) \bigg) \bigg)^{\frac{1}{4}}.	\label{eq:thm1}
\end{multline}
\end{lemma}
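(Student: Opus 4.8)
The plan is to characterize the double root of $f(r)$ in (\ref{eq:f(r)}) as a simultaneous zero of $f$ and $f'$, and then eliminate $r$. First I would compute $f'(r) = 4r^3 - 6 d \Theta r^2 + 2 d^2 r = 2 r \left( 2 r^2 - 3 d \Theta r + d^2 \right)$. Since $f(0) = -\varsigma \neq 0$, any double root $r_0$ is nonzero and hence a root of $2 r^2 - 3 d \Theta r + d^2$, giving $r_0 = \tfrac{d}{4}\left(3\Theta \pm \sqrt{9\Theta^2 - 8}\right)$, which is real precisely because $9\Theta^2 \geq 8$ for $M \geq 10$ (consistent with Lemma 1). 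A short sign analysis of $f'$ on $(0,\infty)$ shows that the smaller critical point $r_- = \tfrac{d}{4}\left(3\Theta - \sqrt{9\Theta^2-8}\right)$ is the local maximum of $f$; since $f(r_-)$ increases with $d$ and tends to $-\varsigma < 0$ as $d \to 0^{+}$, it is at $r_-$ that $f$ first acquires a double root as $d$ grows, which is exactly the transition (one positive root, then a simple plus a double root) described just before the lemma. Hence I would set $r_0 = r_-$.

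Next I would substitute $r_- = \rho\, d$, where $\rho = \tfrac14\left(3\Theta - \sqrt{9\Theta^2-8}\right)$ is a root of $2\rho^2 - 3\Theta\rho + 1 = 0$, into $f(r_-) = 0$. By homogeneity this collapses to $d^4 \left(\rho^4 - 2\Theta\rho^3 + \rho^2\right) = \varsigma$, i.e.\ $d^4 = \varsigma \big/ \left[\rho^2\left(\rho^2 - 2\Theta\rho + 1\right)\right]$. The task is then to reduce $\rho^2\left(\rho^2 - 2\Theta\rho + 1\right)$ to a closed form in $\Theta$. Using $2\rho^2 = 3\Theta\rho - 1$ to eliminate $\rho^2$ gives $\rho^2\left(\rho^2 - 2\Theta\rho + 1\right) = \tfrac14\left(3\Theta\rho - 1\right)\left(1 - \Theta\rho\right)$; inserting $\rho = \tfrac14\left(3\Theta - \sqrt{9\Theta^2-8}\right)$ and simplifying with $\left(\sqrt{9\Theta^2-8}\right)^2 = 9\Theta^2 - 8$, the surd terms collect (via $\Theta(9\Theta^2-8)\sqrt{9\Theta^2-8} = \Theta(9\Theta^2-8)^{3/2}$) into $\rho^2\left(\rho^2 - 2\Theta\rho + 1\right) = \tfrac1{32}\left(\Theta(9\Theta^2-8)^{3/2} - \left(27\Theta^4 - 36\Theta^2 + 8\right)\right)$.

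Finally I would rationalize. Writing $A = 27\Theta^4 - 36\Theta^2 + 8$ and $B = \Theta(9\Theta^2-8)^{3/2}$, the above reads $d^4 = 32\varsigma / (B - A)$; multiplying numerator and denominator by $B + A$ turns the denominator into $B^2 - A^2$. The key identity, which I would verify by expanding both sides as degree-eight polynomials in $\Theta$, is $A^2 - B^2 = 64\left(1 - \Theta^2\right)$. This gives $d^4 = -\varsigma(B + A) \big/ \left[2\left(1-\Theta^2\right)\right]$; using $\csc^2(\pi/M) = 1/(1-\Theta^2)$ and $\varsigma B = \sqrt{\varsigma^2\Theta^2(9\Theta^2-8)^3}$ (valid since $\varsigma \geq 0$, $\Theta \geq 0$, and $9\Theta^2 - 8 \geq 0$ for $M \geq 10$) yields $d^4 = -\tfrac12\csc^2(\pi/M)\left(\sqrt{\varsigma^2\Theta^2(9\Theta^2-8)^3} + \varsigma\left(27\Theta^4-36\Theta^2+8\right)\right)$. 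The same identity shows $B^2 < A^2$ for $\Theta < 1$, so $B < -A$ (note $A < 0$ throughout $M \geq 10$), making the bracketed quantity negative and the right-hand side positive; taking the positive fourth root gives (\ref{eq:thm1}). I expect the main obstacle to be establishing the degree-eight identity $A^2 - B^2 = 64(1-\Theta^2)$ --- a careful but routine expansion on which the whole argument hinges --- together with correctly identifying $r_-$ (rather than the larger critical point) as the relevant double root and confirming positivity of the fourth-root argument.
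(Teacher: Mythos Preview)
Your argument is correct, and it reaches (\ref{eq:thm1}) by a genuinely different route from the paper. The paper's own proof (Appendix~B) does not parametrize the double root at all: it writes down the full quartic discriminant $\Delta$ of (\ref{eq:f(r)}), observes that $\Delta=0$ is a quadratic in $d^{4}$, and solves that quadratic directly; the correct root is then selected by an asymptotic argument (the other solution of the quadratic diverges as $M\to\infty$) together with the auxiliary sign conditions $8a_4a_2-3a_3^2<0$, etc. Your approach instead locates the double root as the simultaneous zero of $f$ and $f'$, identifies it with the local maximum $r_{-}=\rho d$ via the sign of $f'$ on $(0,\infty)$, and substitutes. Your computation $\rho^{2}(\rho^{2}-2\Theta\rho+1)=\tfrac{1}{32}\bigl(B-A\bigr)$ and the identity $A^{2}-B^{2}=64(1-\Theta^{2})$ are both correct (and the latter is exactly what makes the discriminant in the paper's approach factor as a quadratic in $d^{4}$), so after rationalization you land on the same expression.

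What your approach buys is that it is more elementary---no need for the general quartic discriminant formula or its four side conditions---and it makes the branch selection transparent: you pick $r_{-}$ (not $r_{+}$) from the outset because the transition ``one positive root $\to$ simple root plus double root'' described just before the lemma can only occur when the local maximum of $f$ first touches zero, which you justify by $f(r_{-})>f(r_{+})$ and $f(r_{-})\to-\varsigma$ as $d\to 0^{+}$. The paper's discriminant route is more mechanical and would generalize more readily to quartics without the nice homogeneity $r_{-}=\rho d$, but for this particular polynomial your method is cleaner.
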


\begin{proof}
See Appendix B.
\end{proof}

It can be shown that (\ref{eq:thm1}) converges to a finite value as $M \rightarrow \infty$. As such, we are now required to find the intersection of the two expressions in Lemmas 2 and 3. This allows us to determine $M_{th}$ where the correct $d^{*}$ switches from (\ref{eq:thm1a}) to (\ref{eq:thm1}); that is, the value of $d$ where the double root in (\ref{eq:f(r)}) occurs before $r_{max}$ stops increasing. The value of $M_{th}$ can be obtained using standard mathematical packages. It follows that the correct value of $d^{*}$ for a given $M$ is the smaller of the expressions in Lemmas 1 and 3. This completes the proof.


\section{Proof of Lemma 3}
We derive the closed-form expression of $d^{'}$, which holds when (\ref{eq:f(r)}) admits a real double root. By rearranging the coefficients of the depressed quartic equation in \cite{quartic}, for a quartic equation of the form $a_{4} x^{4} + a_{3} x^{3} + a_{2} x^{2} + a_{1} x + a_{0} = 0$, the conditions for a double root to occur are:
\begin{align}
\Delta &= 0, \label{appB_2a} \\
8 a_{4} a_{2} - 3 a_{3}^{2} &< 0, \label{appB_2b} \\
64 a_{4}^{3} a_{0} - 16 a_{4}^{2} a_{3} a_{1} - 16 a_{4}^{2} a_{2}^{2} + 16 a_{4} a_{3}^{2} a_{2} - 3 a_{3}^{4} &< 0, \label{appB_2c} \\ 
a_{2}^{2} + 12 a_{4} a_{0} - 3 a_{3} a_{1} &\neq 0, 	\label{appB_2d}
\end{align}
where $\Delta$ is the discriminant. When $a_{1} = 0$, 
\begin{align}
\Delta &= 256 a_{4}^{3} a_{0}^{3} - 128 a_{4}^{2} a_{2}^{2} a_{0}^{2} + 144 a_{4} a_{3}^{2} a_{2} a_{0}^{2} + 16 a_{4} a_{2}^{4} a_{0} \nonumber \\
		& \quad - 27 a_{3}^{4} a_{0}^{2} - 4 a_{3}^{2} a_{2}^{3} a_{0} \nonumber \\
		&= \left( 16 \varsigma \Theta^{2}\!-\!16 \varsigma \right) d^{8}\!+\!\left( 576 \varsigma^{2} \Theta^{2}\!-\!128 \varsigma^{2}\!-\!432 \varsigma^{2} \Theta^{4} \right) d^{4} \nonumber \\ 
		& \quad - 256 \varsigma^{3}.	\label{appB_3}
\end{align}

Equation (\ref{appB_3}) is quadratic with respect to $d^{4}$. Solving (\ref{appB_3}) and taking the positive root yields the result in (\ref{eq:thm1}). The negative root is given by
\begin{multline}
(d^{4})_{-} = \left(\!\frac{1}{4}\!\left( 9 \varsigma\!+\!27 \varsigma \cos \left( \frac{2 \pi}{M} \right) \right. \right. \\ \left. \left. + 2 \left( \varsigma\!+\!\sqrt{\varsigma^{2} \Theta^{2} \left( 9 \Theta^{2}\!-\!8 \right)^{3} } \right) \csc^{2}\!\left( \frac{\pi}{M} \right) \right)\!\right)^{\frac{1}{4}}.	\label{negativeRoot}
\end{multline}
$(d^{4})_{-}$ tends to infinity as $M$ grows large, and therefore cannot be the correct solution. It can be shown that (\ref{negativeRoot}) upper bounds (\ref{eq:thm1a}) for $M \geq 10$, implying that no boundary exists between the small-$M$ and large-$M$ regimes, contradicting previous observations. We check that the other conditions are satisfied for the positive root. Inequality (\ref{appB_2b}) is independent of $d$ and simplifies to $M \geq 6$, which is already true since it is assumed that $M \geq 10$. Rearranging (\ref{appB_2c}) gives
\begin{equation}
d^{4} > \frac{4 \varsigma}{1 + 3 \Theta^{4} - 4 \Theta^{2}}, \label{appB_4}
\end{equation}
where the RHS is always negative. Inequation (\ref{appB_2d}) simplifies to $d^{4} \neq 12 \varsigma$, which is the condition required for a triple root to occur. However, a triple root can only occur when the discriminant in Lemma 1 is zero, which is not possible for an integer value of $M$. Therefore, the positive root in (\ref{eq:thm1}) is the correct solution, as long as (\ref{appB_2c}) and (\ref{appB_2d}) are not violated.


\bibliographystyle{ieeetran}
\bibliography{IEEEabrv,biplace_ref}

\begin{thebibliography}{10}
\providecommand{\url}[1]{#1}
\csname url@samestyle\endcsname
\providecommand{\newblock}{\relax}
\providecommand{\bibinfo}[2]{#2}
\providecommand{\BIBentrySTDinterwordspacing}{\spaceskip=0pt\relax}
\providecommand{\BIBentryALTinterwordstretchfactor}{4}
\providecommand{\BIBentryALTinterwordspacing}{\spaceskip=\fontdimen2\font plus
\BIBentryALTinterwordstretchfactor\fontdimen3\font minus
  \fontdimen4\font\relax}
\providecommand{\BIBforeignlanguage}[2]{{%
\expandafter\ifx\csname l@#1\endcsname\relax
\typeout{** WARNING: IEEEtran.bst: No hyphenation pattern has been}%
\typeout{** loaded for the language `#1'. Using the pattern for}%
\typeout{** the default language instead.}%
\else
\language=\csname l@#1\endcsname
\fi
#2}}
\providecommand{\BIBdecl}{\relax}
\BIBdecl

\bibitem{Van08}
G.~{Vannucci}, A.~{Bletsas}, and D.~{Leigh}, ``A software-defined radio system
  for backscatter sensor networks,'' \emph{{IEEE} Trans. Wireless Commun.},
  vol.~7, no.~6, pp. 2170--2179, Jun. 2008.

\bibitem{Ble10}
A.~{Bletsas}, A.~G. {Dimitriou}, and J.~N. {Sahalos}, ``Improving backscatter
  radio tag efficiency,'' \emph{{IEEE} Trans. Microw. Theory Techn.}, vol.~58,
  no.~6, pp. 1502--1509, Jun. 2010.

\bibitem{Boy12}
C.~{Boyer} and S.~{Roy}, ``Coded {QAM} backscatter modulation for {RFID},''
  \emph{{IEEE} Trans. Commun.}, vol.~60, no.~7, pp. 1925--1934, Jul. 2012.

\bibitem{patent}
R.~Sadr, J.~Gevargiz, R.~Lee, M.~Manteghi, G.~Oliver, M.~Collender, C.~Jones,
  and H.~Syed, ``{RFID} systems using distributed exciter network,'' Mar.~12
  2013, {US} Patent 8,395,482.

\bibitem{GD09}
J.~D. {Griffin} and G.~D. {Durgin}, ``Complete link budgets for
  backscatter-radio and {RFID} systems,'' \emph{{IEEE} Antennas Propag. Mag.},
  vol.~51, no.~2, pp. 11--25, Apr. 2009.

\bibitem{Kim14}
J.~{Kimionis}, A.~{Bletsas}, and J.~N. {Sahalos}, ``Increased range bistatic
  scatter radio,'' \emph{{IEEE} Trans. Commun.}, vol.~62, no.~3, pp.
  1091--1104, Mar. 2014.

\bibitem{Fas15}
N.~{Fasarakis-Hilliard}, P.~N. {Alevizos}, and A.~{Bletsas}, ``Coherent
  detection and channel coding for bistatic scatter radio sensor networking,''
  \emph{{IEEE} Trans. Commun.}, vol.~63, no.~5, pp. 1798--1810, May 2015.

\bibitem{Shen16}
Z.~{Shen}, A.~{Athalye}, and P.~M. {Djuric}, ``Phase cancellation in
  backscatter-based tag-to-tag communication systems,'' \emph{{IEEE} Internet
  Things J.}, vol.~3, no.~6, pp. 959--970, Dec. 2016.

\bibitem{Alev17}
P.~N. Alevizos, A.~Bletsas, and G.~N. Karystinos, ``Noncoherent short packet
  detection and decoding for scatter radio sensor networking,'' \emph{{IEEE}
  Trans. Commun.}, vol.~65, no.~5, pp. 2128--2140, May 2017.

\bibitem{Zhu18}
G.~{Zhu}, S.~{Ko}, and K.~{Huang}, ``Inference from randomized transmissions by
  many backscatter sensors,'' \emph{{IEEE} Trans. Wireless Commun.}, vol.~17,
  no.~5, pp. 3111--3127, May 2018.

\bibitem{HH17}
K.~Han and K.~Huang, ``Wirelessly powered backscatter communication networks:
  Modeling, coverage, and capacity,'' \emph{{IEEE} Trans. Wireless Commun.},
  vol.~16, no.~4, pp. 2548--2561, Apr. 2017.

\bibitem{Multistatic}
P.~N. {Alevizos}, K.~{Tountas}, and A.~{Bletsas}, ``Multistatic scatter radio
  sensor networks for extended coverage,'' \emph{{IEEE} Trans. Wireless
  Commun.}, vol.~17, no.~7, pp. 4522--4535, Jul. 2018.

\bibitem{BC17}
\BIBentryALTinterwordspacing
M.~Bacha and B.~Clerckx. Backscatter communications for the internet of things:
  A stochastic geometry approach. [Online]. Available:
  \url{https://arxiv.org/pdf/1711.07277v2.pdf}
\BIBentrySTDinterwordspacing

\bibitem{P2}
X.~Cheng, D.-Z. Du, L.~Wang, and B.~Xu, ``Relay sensor placement in wireless
  sensor networks,'' \emph{Wireless Netw.}, vol.~14, no.~3, pp. 347--355, Jun.
  2008.

\bibitem{P4}
M.~{Minelli}, M.~{Ma}, M.~{Coupechoux}, J.~{Kelif}, M.~{Sigelle}, and
  P.~{Godlewski}, ``Optimal relay placement in cellular networks,''
  \emph{{IEEE} Trans. Wireless Commun.}, vol.~13, no.~2, pp. 998--1009, Feb.
  2014.

\bibitem{P5}
D.~{Yang}, X.~{Fang}, G.~{Xue}, and J.~{Tang}, ``Relay station placement for
  cooperative communications in {WiMAX} networks,'' in \emph{Proc. IEEE Glob.
  Commun. Conf. (GLOBECOM)}, Dec. 2010, pp. 1--5.

\bibitem{P8}
E.~L. {Lloyd} and G.~{Xue}, ``Relay node placement in wireless sensor
  networks,'' \emph{{IEEE} Trans. Comput.}, vol.~56, no.~1, pp. 134--138, Jan.
  2007.

\bibitem{W9}
C.~{Zhong}, X.~{Chen}, Z.~{Zhang}, and G.~K. {Karagiannidis},
  ``Wireless-powered communications: Performance analysis and optimization,''
  \emph{{IEEE} Trans. Commun.}, vol.~63, no.~12, pp. 5178--5190, Dec. 2015.

\bibitem{W93}
H.~{Ju} and R.~{Zhang}, ``Throughput maximization in wireless powered
  communication networks,'' \emph{{IEEE} Trans. Wireless Commun.}, vol.~13,
  no.~1, pp. 418--428, Jan. 2014.

\bibitem{Guo19}
J.~{Guo}, X.~{Zhou}, and S.~{Durrani}, ``Wireless power transfer via {mmWave}
  power beacons with directional beamforming,'' \emph{{IEEE} Wireless Commun.
  Lett.}, vol.~8, no.~1, pp. 17--20, Feb. 2019.

\bibitem{W7}
H.~{Tabassum} and E.~{Hossain}, ``On the deployment of energy sources in
  wireless-powered cellular networks,'' \emph{{IEEE} Trans. Commun.}, vol.~63,
  no.~9, pp. 3391--3404, Sep. 2015.

\bibitem{W10}
S.~{Bi} and R.~{Zhang}, ``Placement optimization of energy and information
  access points in wireless powered communication networks,'' \emph{{IEEE}
  Trans. Wireless Commun.}, vol.~15, no.~3, pp. 2351--2364, Mar. 2016.

\bibitem{W12}
K.~{Huang} and V.~K.~N. {Lau}, ``Enabling wireless power transfer in cellular
  networks: Architecture, modeling and deployment,'' \emph{{IEEE} Trans.
  Wireless Commun.}, vol.~13, no.~2, pp. 902--912, Feb. 2014.

\bibitem{W14}
Z.~{Wang}, L.~{Duan}, and R.~{Zhang}, ``Adaptively directional wireless power
  transfer for large-scale sensor networks,'' \emph{{IEEE} J. Sel. Areas
  Commun.}, vol.~34, no.~5, pp. 1785--1800, May 2016.

\bibitem{W94}
K.~{Liang}, L.~{Zhao}, G.~{Zheng}, and H.~{Chen}, ``Non-uniform deployment of
  power beacons in wireless powered communication networks,'' \emph{{IEEE}
  Trans. Wireless Commun.}, vol.~18, no.~3, pp. 1887--1899, Mar. 2019.

\bibitem{W95}
C.~{Zhang} and G.~{Zhao}, ``On the deployment of distributed antennas of power
  beacon in wireless power transfer,'' \emph{IEEE Access}, vol.~6, pp.
  7489--7502, 2018.

\bibitem{Ros20}
O.~M. {Rosabal}, O.~L.~A. {López}, H.~{Alves}, S.~{Montejo-Sánchez}, and
  M.~{Latva-aho}, ``On the optimal deployment of power beacons for massive
  wireless energy transfer,'' \emph{{IEEE} Internet Things J.}, vol.~8, no.~13,
  pp. 10\,531--10\,542, Jul. 2021.

\bibitem{Ves18}
M.~{Vestakis}, P.~N. {Alevizos}, G.~{Vougioukas}, and A.~{Bletsas},
  ``Multistatic narrowband localization in backscatter sensor networks,'' in
  \emph{Proc. Int. Workshop Signal Process. Advances in Wireless Commun.
  (SPAWC)}, 2018, pp. 1--5.

\bibitem{Wang16}
G.~{Wang}, F.~{Gao}, R.~{Fan}, and C.~{Tellambura}, ``Ambient backscatter
  communication systems: Detection and performance analysis,'' \emph{{IEEE}
  Trans. Commun.}, vol.~64, no.~11, pp. 4836--4846, Nov. 2016.

\bibitem{Van16}
P.~{Van}, H.~N. {Le}, M.~N. {Le}, and D.~{Ha}, ``Performance analysis in
  wireless power transfer system over {N}akagami fading channels,'' in
  \emph{Proc. Int. Conf. Electron. Info. Commun. (ICEIC)}, Jan. 2016, pp. 1--4.

\bibitem{Mos85}
P.~G. Moschopoulos, ``The distribution of the sum of independent gamma random
  variables,'' \emph{Ann. Inst. Statist. Math}, vol. 37, Part A, pp. 541--544,
  1985.

\bibitem{Cov14}
S.~Covo and A.~Elalouf, ``A novel single-gamma approximation to the sum of
  independent gamma variables, and a generalization to infinitely divisible
  distributions,'' \emph{Electron. J. Statist.}, vol.~8, no.~1, pp. 894--926,
  2014.

\bibitem{GF13}
J.~A. {Gay-Fernández} and I.~{Cuiñas}, ``Peer to peer wireless propagation
  measurements and path-loss modeling in vegetated environments,'' \emph{{IEEE}
  Trans. Antennas Propag.}, vol.~61, no.~6, pp. 3302--3311, Jun. 2013.

\bibitem{R1}
P.~N. Alevizos and A.~Bletsas, ``Sensitive and nonlinear far-field {RF} energy
  harvesting in wireless communications,'' \emph{{IEEE} Trans. Wireless
  Commun.}, vol.~17, no.~6, pp. 3670--3685, Jun. 2018.

\bibitem{quartic}
\BIBentryALTinterwordspacing
E.~L. Rees, ``Graphical discussion of the roots of a quartic equation,''
  \emph{Amer. Math. Monthly}, vol.~29, no.~2, pp. 51--55, 1922. [Online].
  Available: \url{http://www.jstor.org/stable/2972804}
\BIBentrySTDinterwordspacing

\end{thebibliography}

\end{document}